\newtheorem{theorem}{Theorem}
\newtheorem{remark}{Remark}
\newtheorem{corollary}{Corollary}
\newcommand\bmat[1]{\mathbf{#1}}
\newcommand{\nacf}[0]{\operatorname{nacf}}
\newcommand{\pnacf}[0]{\operatorname{pnacf}}
\newcommand{\GNAR}[0]{\operatorname{GNAR}}
\newcommand{\rstage}[0]{\mbox{$r$-stage}}
\newcommand{\comGNAR}[0]{\mbox{community-$\alpha$}}
\begin{document}

\title{\bf Modelling clusters in network time series with an application to presidential elections in the USA }

\author{Guy Nason\footnotemark[1] \footnotemark[2],
    Daniel Salnikov\thanks{Dept.\ Mathematics, Huxley Building, Imperial College, 180 Queen's Gate, South Kensington, London, SW7 2AZ, UK.} \footnotemark[2]\\
    Imperial College London\\
    and \\
    Mario Cortina-Borja\thanks{Great Ormond Street Institute of Child Health, 30 Guilford Street, London WC1N 1EH} \footnotemark[1] \\
    Great Ormond Street Institute of Child Health,\\
    University College London}

\date{January 15, 2024}

\maketitle

\begin{abstract}
    Network time series are becoming increasingly relevant in the study of dynamic processes characterised by a known or inferred underlying network structure. Generalised Network Autoregressive (GNAR) models provide a parsimonious framework for exploiting the underlying network, even in the high-dimensional setting. We extend the GNAR framework by introducing the \mbox{\textit{community}-$\alpha$} GNAR model that exploits prior knowledge and/or exogenous variables for identifying and modelling dynamic interactions across communities in the network. We further analyse the dynamics of {\em Red, Blue} and {\em Swing} states throughout presidential elections in the USA. Our analysis suggests interesting global and communal effects.
\end{abstract}

\noindent%
{\it Keywords: }{time series clustering, Generalised Network Autoregressive (GNAR) process, community interactions,
R-Corbit plot}

%\newpage
\section{Introduction}
\label{sec: introduction}
Modelling dynamics present in network time series necessitates studying a constant flux of temporal data characterised by large numbers of interacting variables, which are associated to a network structure, e.g., networks in climate science, cyber-security, biology and political science to name a few. Traditional models, such as vector autoregressive processes (VAR), become increasingly difficult to estimate and interpret as the number of variables increases, i.e., the well known {\em curse of dimensionality}. Recently, the generalised network autoregressive (GNAR) model has been developed \cite{gnar_org, Zhu17, gnar_paper}, which provides a parsimonious model that is more interpretable {\em and} has shown superior forecasting performance in a number of settings, including the high-dimensional one, e.g., see \cite{corbit_paper}. Developments in this area include \cite{ArmillottaFokianos21, liu2023new} for Poisson/count data processes, \cite{NasonWei22} to admit time-changing covariate variables and \cite{Mantziou23, malinovskaya2023statistical} for GNAR processes on the edges of networks. We introduce the \mbox{community-$\alpha$} GNAR specification for modelling dynamic clusters in network time series; see \cite{CHEN20231239, zhu2023simultaneous} for related work. This model should be seen as an addition to the existing toolbox, rather than as a general method, which assumes prior knowledge of the network structure. Hence, it is useful when data are effectively described by an underlying network in which community structure is identifiable. Thus, it can be combined with methods that estimate network structures and/or clusters in dynamic settings. These are of interest in network and spatio-temporal modelling, e.g., \cite{clust_functional} propose methods for identifying clusters in temporal settings, and \cite{parsimonious_conf} for networked data. These can aid in identifying the communities in a network time series that can be modelled as a \mbox{community-$\alpha$} GNAR model. 

\subsection{Review of GNAR models}
\label{subsec: review of GNAR models}
A network time series $\mathcal{X} := (\boldsymbol{X}_t, \mathcal{G})$ is a stochastic process that manages interactions between nodal time series $X_{i, t} \in \mathbb{R}$ based on the underlying network $\mathcal{G}$. It is composed of a multivariate time series $\boldsymbol{X}_t \in \mathbb{R}^d$ and an underlying network $\mathcal{G} = (\mathcal{K}, \mathcal{E})$, where $\mathcal{K} = \{1, \dots, d\}$ is the node set, $\mathcal{E} \subseteq \mathcal{K} \times \mathcal{K}$ is the edge set, and $\mathcal{G}$ is an undirected graph with $d \in \mathbb{Z}^+$ nodes. Each nodal time series $X_{i, t}$ is linked to node $i \in \mathcal{K}$. Throughout this work we assume that the network is static, however, GNAR processes can handle time-varying networks; see \cite{gnar_paper}. GNAR models provide a parsimonious framework by exploiting the network structure. This is done by sharing information across nodes in the network, which allows us to estimate fewer parameters in a more efficient manner. A key notion is that of \mbox{$r$-stage} neighbours, we say that nodes $i$ and $j$ are \mbox{$r$-stage} neighbours if and only if the shortest path between them in $\mathcal{G}$ has a distance of $r$, i.e., $d(i, j) = r$. We use \mbox{$r$-stage} adjacency to define the $d \times d$ \mbox{$r$-stage} adjacency matrices $\mathbf{S}_r$, where $[\mathbf{S}_r]_{ij} := \mathbb{I} \{ d(i, j) = r \}$, $\mathbb{I}$ is the indicator function, $r \in \{1, \dots, r_{\text{max}} \}$, and $r_{\text{max}} \in \mathbb{Z}^+$ is the longest shortest path in $\mathcal{G}$. The $\mathbf{S}_r$ extend the notion of adjacency from an edge between nodes to the length of shortest paths (i.e., smallest number of edges between nodes). Note that $\mathbf{S}_1$ is the adjacency matrix and that all the $\mathbf{S}_r$ are symmetric. Further, assume that unique association weights $w_{ij} \in [0, 1]$ between nodes are available. These weights measure the relevance node $j$ has for forecasting $i$, and can be interpreted as the proportion of the neighbourhood effect attributable to node $j$. We define the weights matrix $\mathbf{W} \in \mathbb{R}^{d \times d}$ as the matrix $[\mathbf{W}]_{ij} := w_{ij}$. Note that since there are no self-loops in $\mathcal{G}$ all diagonal entries in $\mathbf{W}$ are equal to zero, and that since $w_{ij} \neq w_{ji}$ is valid $\mathbf{W}$ is not necessarily symmetric (i.e., nodes can have different degrees of relevance). 
\par
In the absence of prior weights GNAR assigns equal importance to each \mbox{$r$-stage} neighbour in a neighbourhood regression, i.e., $w_{ij} = \{| \mathcal{N}_r (i) |\}^{-1}$, where $|\mathcal{N}_r (i)| \leq d - 1$ is the number of \mbox{$r$-stage} neighbours of node $i$ and $\mathcal{N}_r(i) \subset \mathcal{K}$ is the set of \mbox{$r$-stage} neighbours of node $i$. A GNAR model assumes that effects are shared among \mbox{$r$-stage} neighbours, rather than considering pair-wise regressions, it focuses on the joint effect \mbox{$r$-stage} neighbours have on $X_{i, t}$. To do this we express the autoregressive model in terms of \mbox{$r$-stage} neighbourhood regressions. These are given by $\boldsymbol{Z}_{t}^r := \left ( \mathbf{W} \odot \mathbf{S}_r \right ) \boldsymbol{X}_t$, where $\odot$ denotes the Hadamard (component-wise) product. Each entry $Z_{i,t}^r$ in $\boldsymbol{Z}_t^r$ is the \mbox{$r$-stage} neighbourhood regression corresponding to node $i$. The vector-wise representation of a \mbox{\textit{global}-$\alpha$} $\mathrm{GNAR} \left (p, [s_k] \right )$ model is given by
\begin{equation} \label{eq: global alpha}
    \boldsymbol{X}_{t} = \sum_{k = 1}^{p} ( \alpha_k \boldsymbol{X}_{t - k} + \sum_{r = 1}^{s_k} \beta_{kr} \boldsymbol{Z}_{r, t - k} ) 
    + \boldsymbol{u}_{t}, 
\end{equation}
where $\alpha_k \in \mathbb{R}$ and $\beta_{kr} \in \mathbb{R}$ are the autoregressive coefficients, $p \in \mathbb{Z}^+$ is the maximum lag, $s_k \in \{1, \dots, r^*\}$ is the maximum \mbox{$r$-stage} depth at lag $k = 1, \dots, p$, $r^*  \leq r_{\text{max}}$ is the maximum \mbox{$r$-stage} depth across all lags, and $\boldsymbol{u}_t$ are independent and identically distributed zero-mean white noise with covariance matrix $\sigma^2_{\boldsymbol{u}} \mathbf{I_d}$ and $\sigma^2_{\boldsymbol{u}} > 0$. This compact representation is identical to the one in \cite{gnar_paper} and highlights the \textit{parsimonious} structure of a \text{global}-$\alpha$ GNAR model. The construction above follows the one in \cite{corbit_paper}, which includes more details, interpretation and further results.

\section{The \texorpdfstring{\mbox{community-$\alpha$}}{} GNAR model}
\label{sec: gnar methods}
Suppose that there is a collection of covariates $c \in \{1, \dots, C \} = [C]$ such that each $X_{i, t}$ is linked to only one covariate at all times $t \in \mathbb{Z}^+_0$, where $C \in \mathcal{K}$ is the number of covariates. Define $K_c := \left \{ i \in \mathcal{K} : X_{i, t} \text{ is characterised by covariate } c \right \} $. Note that by definition the $K_c$ are disjoint subsets of the node set (i.e., $K_c \subseteq \mathcal{K}$ and $K_c \cap K_{\tilde{c}} = \varnothing$ if $c \neq \tilde{c}$), and $\cup_{c =1}^{C} K_c = \mathcal{K}$. Thus, the $K_c$ form a partition of $\mathcal{K}$ and define non-overlapping clusters in $\mathcal{G}$. Intuitively, each covariate is a label that indicates the cluster to which $X_{i, t}$ belongs, e.g., if $\mathcal{G}$ consists of population centres, then each $X_{i, t}$ could be characterised as either urban, rural or a hub-town, i.e., each cluster is a collection of nodes that defines a community in $\mathcal{G}$. 
\par
The \mbox{community-$\alpha$} GNAR model is an additive model of community-wise autoregressive terms, which are obtained by using the vectors $\boldsymbol{\xi}_c \in \mathbb{R}^d$, where $\boldsymbol{\xi}_c := (\xi_{1, c}, \dots, \xi_{d, c} )$, and $\xi_{i, c} := \mathbb{I} (i \in K_c)$. Each entry in $\boldsymbol{\xi}_c$ is non-zero if and only if $i \in K_c$. The autoregressive terms are $\boldsymbol{X}_t^c := \boldsymbol{\xi}_c \odot \boldsymbol{X}_t$, note that each entry in $\boldsymbol{X}_t^c$ is not constantly zero if and only if $i \in K_c$ (i.e., $X_{i, t}$ is characterised by $c \in [C]$). Further, within community terms are given by 
$$\boldsymbol{Z}_{t - k}^{r,c} = \xi_c \odot \left ( \mathbf{W} \odot \mathbf{S}_r \right) \boldsymbol{X}_{t - k}^{c},$$ 
i.e., $r$-stage neighbourhood regressions constrained to community $K_c$. The model is given by
\begin{equation} \label{eq: community GNAR structural representation}
    \boldsymbol{X}_t = \sum_{c = 1}^{C} \left \{ \boldsymbol{\alpha}_{c} (\boldsymbol{X}_t) + \boldsymbol{\beta}_{c} (\boldsymbol{X}_t) \right \} + \boldsymbol{u}_t,
\end{equation}
where $\boldsymbol{\alpha}_{c} (\boldsymbol{X}_t) :=  \sum_{k = 1}^{p_c} \alpha_{k, c} \boldsymbol{X}_{t - k}^{c}$ is the community autoregressive component, and 
\newline
$\boldsymbol{\beta}_{c} (\boldsymbol{X}_t) := \sum_{k = 1}^{p_c} \sum_{r = 1}^{s_k (c)} \beta_{k, r, c} \boldsymbol{Z}_{t - k}^{r, c}$ is the within community component for each community $K_c$. Above in \eqref{eq: community GNAR structural representation}, $\alpha_{k, c} \in \mathbb{R}$ are autoregressive coefficients at lag $k$ for $K_c$, $\beta_{k, r, c} \in \mathbb{R}$ are \mbox{$r$-stage} neighbourhood regression coefficients at lag $k$ for $K_c$, and $\boldsymbol{u}_t$ are zero-mean independent and identically distributed white noise such that $\mathrm{cov} ( \boldsymbol{u}_t) = \sigma^2_{\boldsymbol{u}} \mathbf{I_d}$ and $\sigma^2_{\boldsymbol{u}} > 0$. We denote the model order of \eqref{eq: community GNAR structural representation} by community-$\alpha$ $\mathrm{GNAR} ([p_c], \{ [s_k (c)] \}, [C])$, where $p_c \in \mathbb{Z}^+$ is maximum lag and $s_k (c) \leq r_{\max}$ is maximum $r$-stage at lag $k$ for $K_c$, $k = 1, \dots, p$ is current lag, $p = \max (p_c)$ is global maximum lag, $C$ is the number of communities, and $c \in [C]$ is the covariate that characterises community $K_c$. The model given by \eqref{eq: community GNAR structural representation} is stationary if its parameters satisfy 
$\sum_{k = 1}^{p_c} \left \{ |\alpha_{k, c} | + \sum_{r = 1}^{s_k (c)} |\beta_{k, r, c} |  \right \} < 1,$
for all covariates $c \in [C]$. This is a direct application of results in \cite{gnar_paper}.

\begin{remark} \label{rem: VAR representation}
    Expressing \eqref{eq: community GNAR structural representation} as a {\em VAR} is done by incorporating networked-informed constraints into autoregressive matrices. Let $\mathbf{\Phi}_{k}$ be the $d \times d$ matrix given by
    \begin{equation} \label{eq: var autoregressive matrices}
        \mathbf{\Phi}_{k} = \sum_{c = 1}^C \left [ \mathrm{diag} ( \alpha_{k, c} \boldsymbol{\xi}_c ) + \sum_{r = 1}^{s_k (c)} \left \{ \beta_{k, r, c} (\mathbf{W}_c \odot \mathbf{S}_r ) \right \} \right ],
    \end{equation}
    where terms for larger order are set to zero, e.g., if $p_c < p_{\tilde{c}}$, then $\alpha_{k, c} \equiv 0$ for $k > p_c$, $[\mathbf{W}_c]_{ij} = w_{ij} \mathbb{I} ( i \in K_c \text{ and } j \in K_c)$, i.e., $\mathbf{W}$ constrained to community $K_c$. Then, the {\em VAR}$(p)$ model given by
    $ \boldsymbol{X}_t = \sum_{k = 1}^p \mathbf{\Phi}_{k} \boldsymbol{X}_{t - k} + \boldsymbol{u}_t,$
    where $\boldsymbol{u}_t$ are {\em i.i.d.} white noise, is identical to the model given by \eqref{eq: community GNAR structural representation}.
\end{remark}

\subsection{Model estimation}
Estimation of GNAR models is straightforward by noting that these are network-informed constrained VAR models; see \cite{gnar_paper, corbit_paper}. However, we present a conditional linear model that exhibits the parsimonious nature of GNAR processes and aids interpretation. Assume that we observe $T \in \mathbb{Z}^+$ time-steps of a stationary \mbox{community-$\alpha$} GNAR process with known order. The data $\mathbf{X} := [\boldsymbol{X}_1, \dots, \boldsymbol{X}_T ]$ are a realisation of length $T$ coming from a stationary $\mathrm{GNAR} ([p_c], \{ [s_k (c)] \}, [C])$. Notice that we can concatenate each community term in \eqref{eq: community GNAR structural representation} into design matrices as follows
\begin{align} \label{eq: time-step desing matrix}
    \mathbf{R}_{k, t, c} &:= \left [\boldsymbol{X}_{t - k}^c |  \boldsymbol{Z}_{t - k}^{1, c} | \dots |  \boldsymbol{Z}_{t - k}^{s_k (c), c} \right], \nonumber \\
    \mathbf{R}_{k, t} &:= \left [ \mathbf{R}_{k, t, 1}| \dots | \mathbf{R}_{k, t, C} \right ], \\
    \mathbf{R}_{t} &:= \left [ \mathbf{R}_{1, t}| \dots | \mathbf{R}_{p, t} \right ], \nonumber
\end{align}
where predictor columns are concatenated in ascending order with respect to $c$, i.e., if $\tilde{c} > c$, then the columns for $c$ precede the ones for $\tilde{c}$. Hence, $ \mathbf{R}_{k, t, c}$ is the design matrix for $K_c$ at lag $k$, $ \mathbf{R}_{k, t}$ is the design matrix for all communities at lag $k$ and $ \mathbf{R}_{t}$ is the design matrix for all communities and lags $k = 1, \dots, p$. 
By stacking the $ \mathbf{R}_{t}$ for $t = p + 1, \dots, T$, and defining $\boldsymbol{\theta} := (\boldsymbol{\theta}_1, \dots, \boldsymbol{\theta}_C ) \in \mathbb{R}^{q}$, where $\boldsymbol{\theta}_c = (\alpha_{1, c}, \beta_{1, 1, c}, \dots, \beta_{1, s_1(c), c}, \alpha_{2, c}, \dots, \beta_{p_c, s_{p}(c), c}) \in \mathbb{R}^{q_c}$, is ordered by lags (i.e., all parameters are stacked for each lag), is the vector of parameters for $K_c$, and $q = \sum_{c = 1}^C q_c$ is the number of unknown parameters. We can write \eqref{eq: community GNAR structural representation} as the linear model 
\begin{equation} \label {eq: linear model}
    \boldsymbol{y} = \mathbf{R} \boldsymbol{\theta} + \boldsymbol{u},
\end{equation}
where $\boldsymbol{y} = (\boldsymbol{X}_{p + 1}, \dots, \boldsymbol{X}_{T}) \in \mathbb{R}^{d(T - p)}$ is the response, $\mathbf{R}$ is the $d(T - p) \times q$ design matrix, and entries in $\boldsymbol{u} = (\boldsymbol{u}_{p + 1}, \dots, \boldsymbol{u}_{T})$ are {\em i.i.d.} white noise. Thus, we can estimate $\boldsymbol{\theta}$ by least-squares, i.e., we use the estimator $\hat{\boldsymbol{\theta}} = \left ( \mathbf{R}^T \mathbf{R} \right )^{-1} \mathbf{R}^T \boldsymbol{y}$ throughout this work.
\begin{remark} \label{rem: estimators}
    Assume that $\boldsymbol{X}_t$ is a stationary {\em community}-$\alpha$ {\em GNAR} model with {\em i.d.d.} white noise residuals, and note that each $\mathbf{R}_c$ has zeros in different rows (non-overlapping communities). Then, $\mathbf{R}$ is orthogonal by blocks, and since $\mathrm{cov} (\hat{\boldsymbol{\theta}} ) = \sigma^2_{\boldsymbol{u}} (\mathbf{R}^T \mathbf{R})^{-1}$, the $\hat{\boldsymbol{\theta}}_c$ are uncorrelated and non-zero entries in the precision matrix, $\{ \mathrm{cov} (\hat{\boldsymbol{\theta}} ) \}^{-1}$, correspond to estimated coefficients in the same community (e.g., $\mathrm{cov} (\hat{\alpha}_{k, c}, \hat{\alpha}_{k, \tilde{c}}) = 0$ if $c \neq \tilde{c}$). Further, if we assume that $\boldsymbol{u}_t \sim \mathrm{N}_{\boldsymbol{d}} (\boldsymbol{0}, \sigma^2_{\boldsymbol{u}} \mathbf{I_d} )$, then $\hat{\boldsymbol{\theta}}$ is the conditional maximum likelihood estimator and $\hat{\boldsymbol{\theta}}_c = \left ( \mathbf{R}_c^T \mathbf{R}_c \right )^{-1} \mathbf{R}_c^T \boldsymbol{y}$ are block-wise independent.
\end{remark}

Note that by Remark \ref{rem: estimators}, it is possible to estimate model parameters separately and simultaneously for community-$\alpha$ GNAR models. This allows us to use more observations for communities with a smaller maximum lag, remove unnecessary predictors from each $c$-community linear model, and perform estimation in parallel, which is useful for very large networks with a lot of observations, e.g., internet traffic network time series. Further, adapting $\hat{\boldsymbol{\theta}}$ to a generalised least-squares setting is straightforward. Suppose that $\mathrm{cov} (\boldsymbol{u}) = \mathbf{\Sigma}_T$. Then, we can estimate $\boldsymbol{\theta}$ by generalised least-squares, i.e., 
\begin{equation} \label{eq: gls estimator}
\hat{\boldsymbol{\theta}}_{\mathrm{gls}} = \left ( \mathbf{R}^T \mathbf{\Sigma}_T^{-1} \mathbf{R} \right )^{-1} \mathbf{R}^T  \mathbf{\Sigma}_T^{-1} \boldsymbol{y},    
\end{equation}
where $\mathbf{\Sigma}_T$ is a valid $d(T - p) \times d(T - p)$ covariance matrix, e.g., $ \mathbf{\Sigma}_T = \mathbf{I_d} \otimes \mathbf{\Sigma}_{\boldsymbol{u}}$, where $\otimes$ denotes Kronecker product, which is block-diagonal with entries $\mathrm{cov}(\boldsymbol{u}_t) = \mathbf{\Sigma}_{\boldsymbol{u}} $ at all times $t$. Moreover, the linear model in \eqref{eq: linear model} can be broken into its community components, which can be estimated by different strategies, e.g., some communities could be regularised and/or estimated using more robust estimators. Also, it is possible to express dependence between residuals in a community-wise manner, i.e., assuming that $\mathbf{\Sigma}_{\boldsymbol{u}}$ is block diagonal, where each block corresponds to one community.

\section{Model Selection}
\label{sec: model selection}
We perform model selection by examining the graphical aids suggested by \cite{corbit_paper}. These are the Corbit (correlation-orbit) and R-Corbit plot, which aid us in visualising the network autocorrelation function (NACF) and the partial NACF (PNACF). These are network-enabled extensions of the autocorrelation and partial autocorrelation functions from univariate time series analysis. The R-Corbit plot is particularly relevant for our purposes of detecting whether or not the order differs across communities. Each point in a Corbit plot corresponds to either $\nacf(h, r)$ or $\pnacf(h, r)$, where $h$ is the $h$th lag and $r$ is $\rstage$ depth. The numbers on the outermost ring indicate lag, and $\rstage$ depth is read by ring order starting from the inside (i.e., the innermost ring is for $r = 1$, the second one for $r = 2$, etc ...). Points in a R-Corbit plot correspond to $\mathrm{(p)nacf}_c (h, r)$, where $h$ is the $h$th lag, $r$ is $\rstage$ depth and $c \in \{1, \dots, C\}$ is the community. These belong to a circle ring, where the mean value, i.e., $C^{-1} \sum_{c = 1}^{C} \mathrm{(p)nacf}_c (h, r)$, is shown at the centre. The numbers on the outside ring indicate lags and each inner ring of circle rings indicates $\rstage$ depth starting from one. Both plots show (P)NACF equal to zero (i.e., $\nacf = 0$ or $\pnacf = 0$) at the centre to aid comparison. See \cite{corbit_paper} for the NACF and PNACF definitions and more details.
\subsection{GNAR simulation}
We simulate ten realisations of length one-hundred coming from a stationary $\comGNAR$ GNAR model given by \eqref{eq: community GNAR structural representation}. Each simulation is prodced using the \textbf{fiveNet} network (included in the \texttt{GNAR} package) as the underlying network; see \cite{gnar_package} and Figure \ref{fig: 2-communal fiveNet}. In what follows, we use the \texttt{GNAR} package for producing Corbit and R-Corbit plots.
\begin{figure}
   \centering
   \includegraphics[scale=0.40]{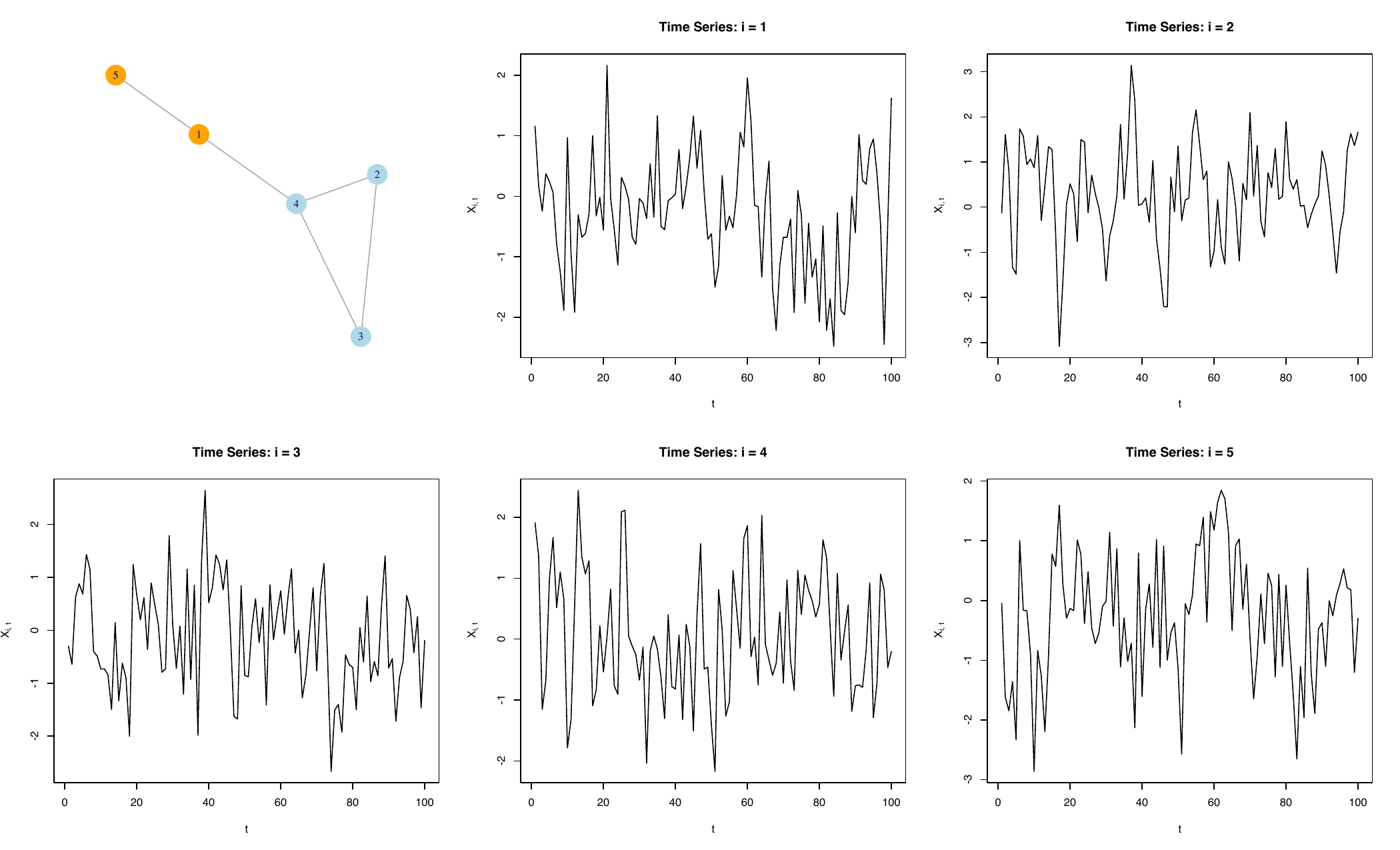}
   \caption{
       The first plot is the \textbf{fiveNet} network (included in the \texttt{GNAR} package); see \cite{gnar_package}. The latter are nodal time series realisations of length 100 coming from a stationary $\comGNAR$ $\GNAR \left ( [1, 2], \{ [1], [1, 1] \}, [2] \right )$, where \textbf{fiveNet} is the underlying network. Blue nodes correspond to community $K_1 = \{2, 3, 4 \}$ and orange ones to community $K_2 = \{1, 5 \}$.
   }
   \label{fig: 2-communal fiveNet}
\end{figure}
The data come from a stationary $\GNAR \left ( [1, 2], \{ [1], [1, 1] \}, [2] \right )$, where $K_1 = \{2, 3, 4\}$ and $K_2 = \{1, 5 \}$ are the two communities. Suppose we no longer know the model order but still know the communities. We proceed to study lag and $\rstage$ order, i.e., the pair $(p_c, [s_{k_c}] )$, by comparing the community correlation structure via the R-Corbit plot in Figure \ref{fig: R-Corbit pnacf} below. 
\begin{figure}
   \centering
   \includegraphics[scale=0.40]{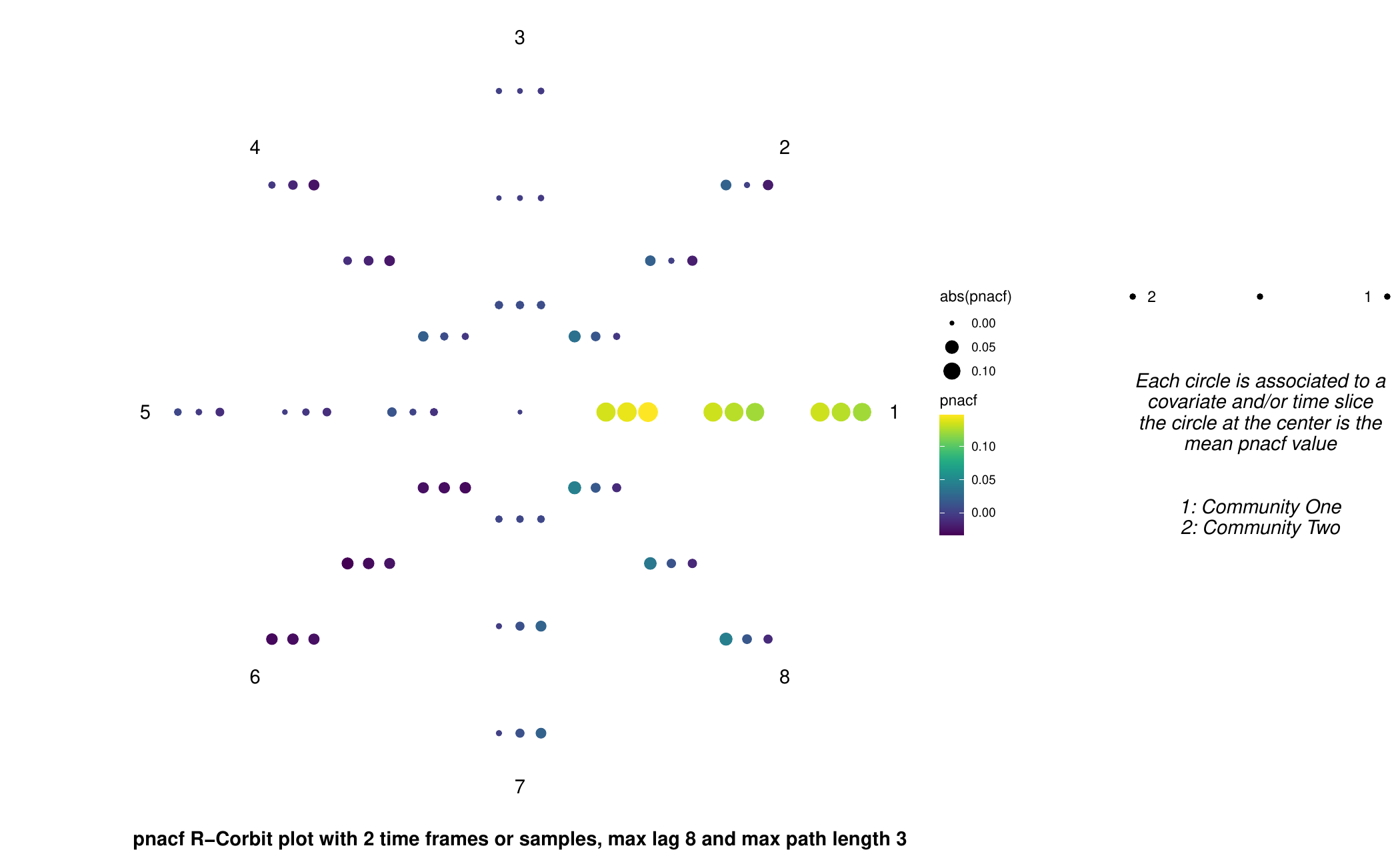}
   \caption{
       R-Corbit plot for a realisation of length 100 coming from a stationary $\comGNAR$ $\GNAR \left ( [1, 2], \{ [1], [1, 1] \}, [2] \right )$, where the underlying network is \textbf{fiveNet}, $K_1 = \{2, 3, 4\}$ and $K_2 = \{1, 5 \}$; see Figure \ref{fig: 2-communal fiveNet}. The maximum lag is equal to eight and maximum $\rstage$ depth is equal to three. The PNACF cut-offs are $(1, [1])$ for $K_1$ and $(2, [1, 1])$ for $K_2$.
        }
   \label{fig: R-Corbit pnacf}
\end{figure}
The R-Corbit plot in Figure \ref{fig: R-Corbit pnacf} shows that the PNACF cuts-off after the second lag for both communities. Furthermore, it shows that the PNACF cuts-off after the first lag and $\rstage$ one for $K_1$, i.e., it suggests the order $(1, [1])$ for $K_1$, and after the second lag at the first $\rstage$ for lags one and two for $K_2$, i.e., it suggests the pair $(2, [1, 1])$ for $K_2$. This is common for all ten R-Corbit plots and reflects the known data-generating process. Table \ref{tab: sim fitted model} shows the estimated coefficients for our chosen model $\GNAR  \left ( [1, 2], \{ [1], [1, 1] \}, [2] \right )$.

\begin{table}[H]
   \centering
   \begin{tabular}{cccc}
         $\hat{\boldsymbol{\theta}}$ & Mean Est. & Sim. Sd. & True Value \\
        \hline
        $\hat{\alpha}_{11}$ & 0.28 & 0.062 & 0.27 \\
        $\hat{\beta}_{111}$ & 0.177 & 0.109 & 0.18 \\
        $\hat{\alpha}_{12}$ & 0.222 & 0.061 & 0.25 \\
        $\hat{\beta}_{112}$ & 0.302 & 0.098 & 0.30 \\
        $\hat{\alpha}_{22}$ & 0.115 & 0.083 & 0.12 \\
        $\hat{\beta}_{212}$ & 0.18 & 0.127 & 0.20
   \end{tabular}
   \caption{
    Mean estimator values for 10 simulations, each of length 100 coming from a $\GNAR  \left ( [1, 2], \{ [1], [1, 1] \}, [2] \right )$. Mean Est. is the mean estimator value. Sim. Sd is the standard deviation of the 10 simulations. True Value is the known parameter. The simulations are computed using the random seeds 1983 to 1993 in \texttt{R}.
   }
   \label{tab: sim fitted model}
\end{table}
Table \ref{tab: sim fitted model} shows the reasonable performance of our estimator $\hat{\boldsymbol{\theta}}$. We end this subsection by noting that Corbit and R-Corbit plots enable quick identification of model order, different correlation structures between communities, and other behaviours such as seasonality and trend for (network) time series. Moreover, the R-Corbit plot in Figure \ref{fig: R-Corbit pnacf} suggests that a test of significance could be performed using the PNACF to identify model order and/or strong correlations, which might point to network effects. We leave these investigations for future work.

\section{Modelling presidential elections in the USA}
\label{sec: application}
The data for our study are obtained from the MIT Election Data and Science Lab and can be accessed at 
({\tt doi.org/10.7910/DVN/42MVDX}). We study the twelve presidential elections in the USA from 1976 to 2020. Denote this network time series by $(\boldsymbol{X}_t, \mathcal{G})$, where $X_{i, t}$ is the percentage of votes for the Republican nominee in the $i$th state (ordered alphabetically) for election year $t \in \{1976, 1980, \dots, 2020 \}$. The network is $\mathcal{G} = (\mathcal{K}, \mathcal{E})$, where $i \in [51]$ and $d = 51$, and it is built by connecting states that share a land border (i.e., there is an edge between two nodes if and only if their respective states share a land border). Following \cite{state_classification_econ_enquiry, state_classification_science_journal}, we classify each node (state) as either \textit{Red}, \textit{Blue} or \textit{Swing} based on the percentage of elections won by either party. The communities are: $i \in K_1$ if the Republican nominee won at least $75\%$ of elections, $i \in K_2$ if the Democrat nominee won at least $75\%$ of elections, and $i \in K_3$ if neither nominee won at least $75\%$ of elections; see Figure \ref{fig: usa net}. 
\par
In what follows, we use the CRAN \texttt{GNAR} package for computing the network autocorrelation function (NACF) and partial NACF (PNACF), and producing R-Corbit plots. The R-Corbit plot in Figure \ref{fig: R-corbit pnacf votes} shows that the PNACF is positive at the first lag, negative and strongest at the second lag, cuts-off at lags three and four, and, interestingly, appears to be strong at the fifth lag across all $r$-stages. At the first lag, the PNACF cuts-off after the first $r$-stage, and at both the second and fifth lags decays as $r$-stage grows but does not cut-off at any $r$-stage. This suggests a positive correlation for elections in which a president is running for reelection, and that network effects influence said election. Remarkably, the strong correlation at the second lag across all $r$-stages suggests a change in the system, which we interpret as alternating between Republican and Democrat nominees once the incumbent president has completed their eight-year term. This has been the case with the exceptions of Jimmy Carter (1976-1980), George Bush (1988-1992) and Donald J. Trump (2016-2020). Interestingly, the exception cases are the ones in which there was a change at the election in which an incumbent president was running for reelection. We believe that the fifth lag might be identifying these oddities. Nevertheless, more analysis is needed.

\begin{table}[H]
    \centering
     \begin{tabular}{ccccccc}
        & GNAR & GNAR* & GNAR+ & sp. VAR & CARar & Naive\\
        \hline
         rMSPE & 3.18 & 2.34 & 24.95 & 72.57 & 3.60 & 2.45 \\
        rMSPE* & 7.29 & 3.00 & 5.47 & 4.86 & 2.75 & 2.45 \\
        $\# ||\boldsymbol{\theta}||$ & 9 & 3 & 377 & 6 & NA
    \end{tabular}
    
    \caption{Model comparisons. GNAR: community-$\alpha$. GNAR*: global-$\alpha$. GNAR+: local-$\alpha$.
    CARar$(2)$: spatio-temporal conditional autoregressive.  
    sp. VAR: Sparse VAR$(2)$. Naive: previous observation. rMSPE $=\{ \sum_{i = 1}^{51} (X_{i, t} - \hat{X}_{i, t})^2 / 51 \}^{1/2} $. rMSPE* is for mean-centred data, i.e., $Y_{i, t} = X_{i, t} - \overline{X}_{i, t}$, $\overline{X}_{i, t} = \sum_{t = 1}^{11} X_{i, t} / 11$. $\# ||\boldsymbol{\theta}||$ is number of parameters.
    }
    \label{tab: model comparisons}
\end{table}

\begin{figure}[H]
    \centering
    \includegraphics[scale=0.40]{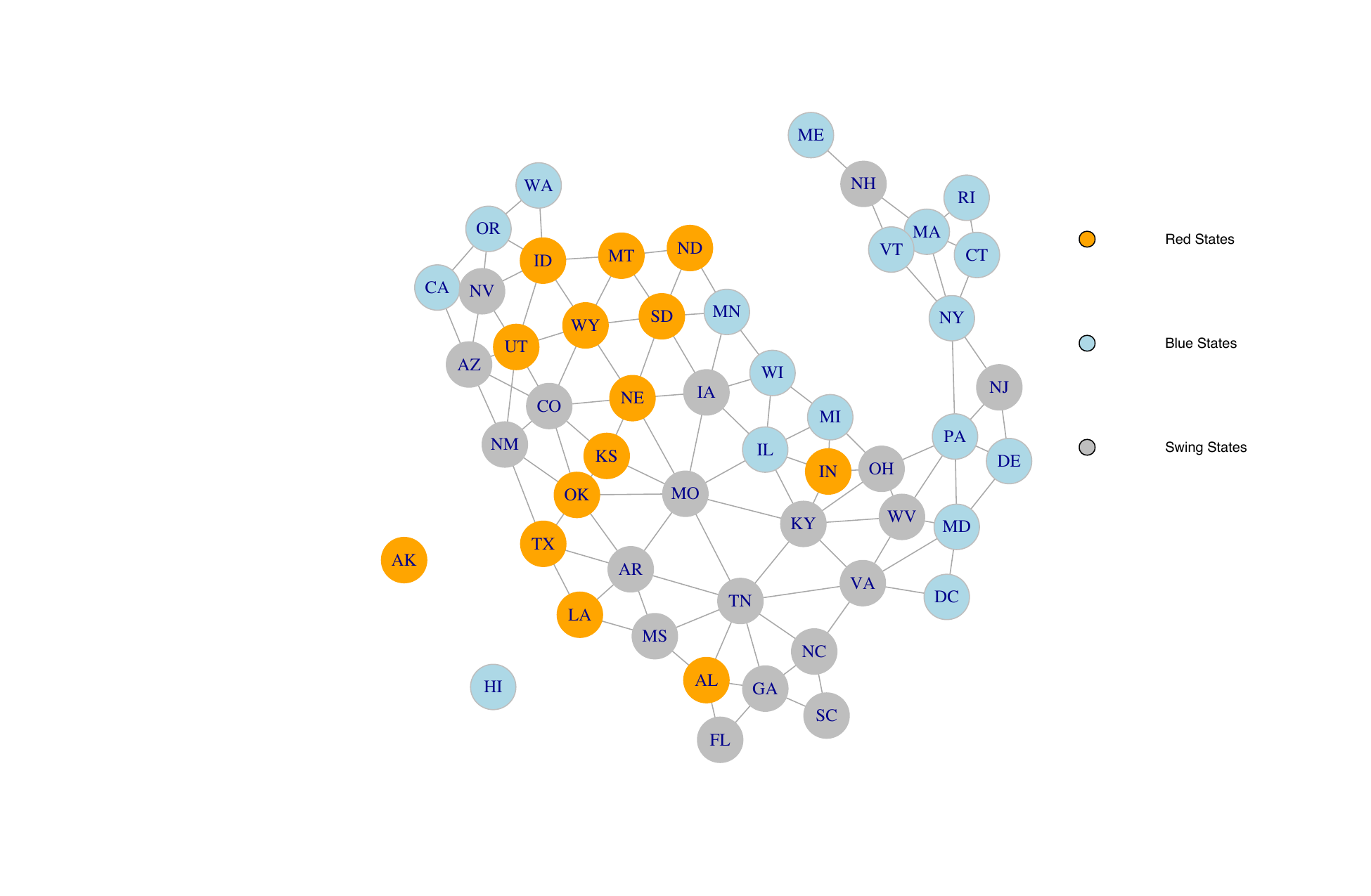}
    \caption{
    USA state-wise network, blue nodes are {\em Blue} states (Democrat Nominee won at least 75$\%$ of elections), orange nodes are {\em Red} states (Republican nominee won at least $75\%$ of elections), and grey nodes are {\em Swing} states (neither party won at least $75\%$ of elections).
    }
    \label{fig: usa net}
\end{figure}

Figure \ref{fig: R-corbit pnacf votes} suggests a model order of lag two and first stage neighbours for the three communities. Thus, our choice of model is a community-$\alpha$ GNAR$([2], \{ [1, 0]\}, [3])$. Table \ref{tab: model comparisons} compares our choice with alternative models. We fix the random seed (e.g., \texttt{set.seed(2024)}), fit sparse VAR using \texttt{sparsevar}; see \cite{sparsevar}, and CARar (forecasts are computed as a global-$\alpha$ GNAR) using \texttt{CARBayesST}; see \cite{carbayes}. Remarkably, global-$\alpha$ GNAR$(2, [1, 0])$ forecasts produce the smallest root mean squared prediction error. However, this is an atypical election (2020) and the dataset is overly sparse, thus, the models' predictive capabilities are likely limited. This preliminary analysis suggests, as expected, that {\em Red} states vote mostly for the Republican nominee, {\em Blue} states for the Democrat nominee, and that {\em Swing} states play the deciding role, however, it also suggests that the clusters behave more similarly than expected, i.e., there appear to be (spatio-temporal) global and communal network effects.

\begin{figure}
    \centering
    \includegraphics[scale=0.45]{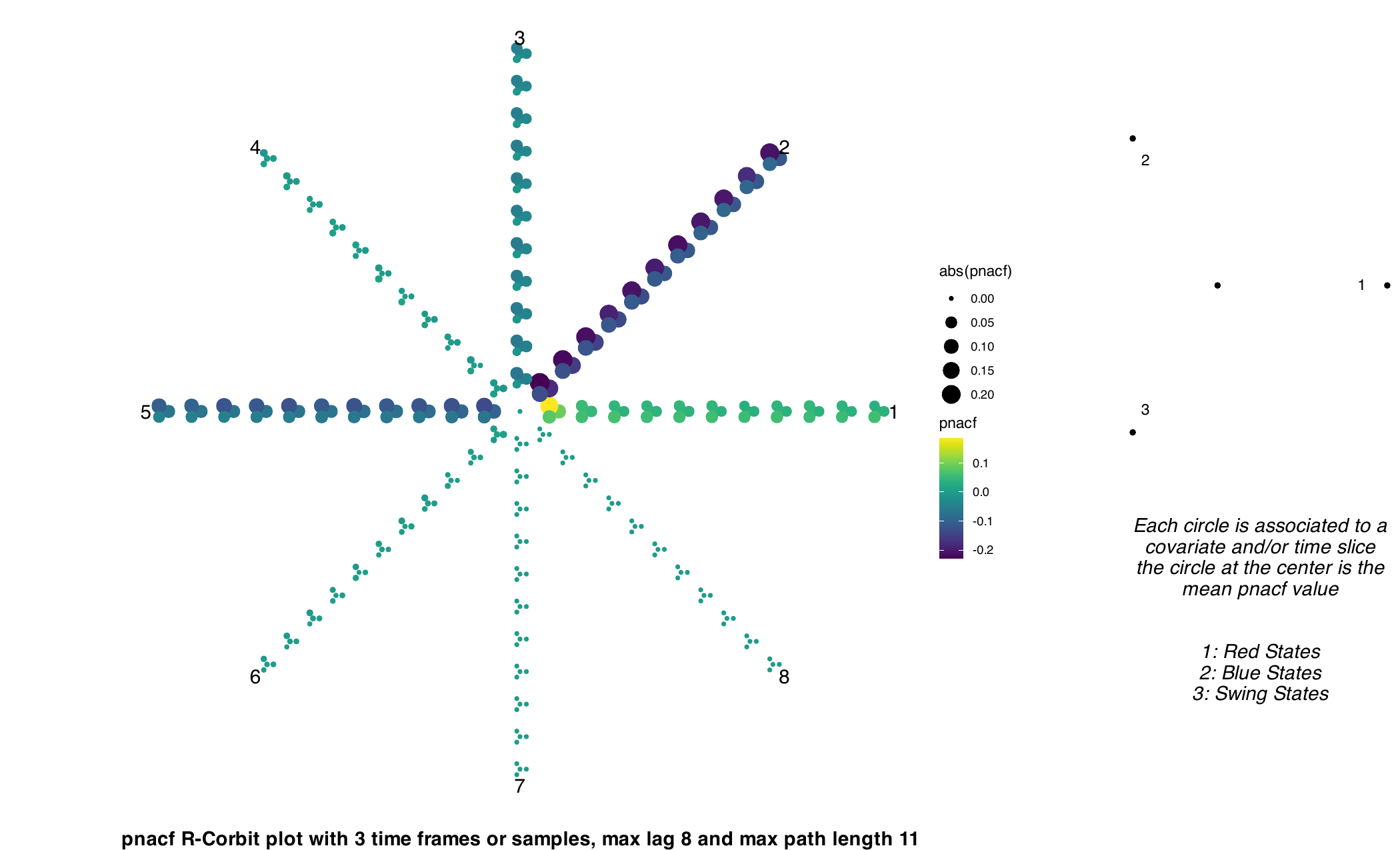}
    \caption{PNACF R-Corbit plot of presidential state-wise percentage vote for Republican nominee. Points in a R-Corbit plot correspond to $\mathrm{(p)nacf}_c (h, r)$, where $h$ is $h$th lag, $r$ is $r$-stage depth and $c \in [C]$ is the community. These belong to a circle ring, where the mean value, i.e., $C^{-1} \sum_{c = 1}^{C} \mathrm{(p)nacf}_c (h, r)$, is shown at the centre. The numbers on the outermost ring indicate lag, and $r$-stage depth is read by ring order starting from the inside (i.e., innermost ring is for $r = 1$, second one for $r = 2$, etc ...). The underlying network is the USA state-wise network; see Figure \ref{fig: usa net}.}
    \label{fig: R-corbit pnacf votes}
\end{figure}

Based on the above we select a \mbox{community-$\alpha$} $\mathrm{GNAR} (2, \{[1, 0]\}, [3])$ model. The parsimonious nature of GNAR models allows us to estimate a model in spite of having twelve observations for a vector of dimension fifty-one. Our chosen model requires at least nine observations for estimation, in contrast, a VAR(1) model for these data requires at least fifty-one observations. Thus, it is not possible to fit a VAR($p$) to these data without performing regularisation, which usually requires cross-validation and a reasonable number of observations. The results follow in Table \ref{tab: usa vote estimate} below.

\begin{table}[H]
    \begin{tabular}{cccccccccc}
        & $\hat{\alpha}_{11}$ & $\hat{\beta}_{111}$ & $\hat{\alpha}_{21}$ & $\hat{\alpha}_{12}$ & $\hat{\beta}_{112}$ & $\hat{\alpha}_{22}$ & $\hat{\alpha}_{13}$ & $\hat{\beta}_{113}$ & $\hat{\alpha}_{23}$ \\
        \hline
        Estimate & 0.393 & 0.183 & -0.593 & 0.558 & 0.069 & -0.351 & 0.905 & -0.747 & -0.591 \\
        Std. Err. & 0.104 & 0.200 & 0.066 & 0.108 & 0.178 & 0.063 & 0.089 & 0.162 & 0.061 \\
    \end{tabular}
    \caption{Parameter estimates for the standardised (i.e., $(X_{i, t} - \overline{X}_i) \{ \sum_{t = 1}^{12} (X_{i, t} - \overline{X}_{i} )^2  \}^{- 1 / 2} $ ) network time series $(\boldsymbol{X}_{t}, \mathcal{G})$ of vote percentages for the Republican nominee in presidential elections in the USA from 1976 to 2020. The fit is a \mbox{community-$\alpha$} $\mathrm{GNAR} (2, \{[1, 0]\}, [3])$. }
\label{tab: usa vote estimate}
\end{table}

The estimated coefficients in Table \ref{tab: usa vote estimate} above show the strong network effects at the first lag, the strongest being the one for {\em Swing} states. This suggests that, on average, {\em Red} and {\em Blue} states have a weaker correlation with neighbouring states than {\em Swing} states do (i.e., {\em Swing} states listen more intently). However, the estimated coefficients in Table \ref{tab: usa vote estimate} do not satisfy stationarity conditions. We circumvent this by studying the well-known one-lag differenced series. The R-Corbit plot in Figure \ref{fig: r-corbit pnacf one-lag diff} suggests that one-lag differencing removes correlation across all $r$-stages for the first lag, and that the PNACF cuts-off after lag three, and that it does not cut-off at any particular $r$-stages. We fit a community-$\alpha$ $\mathrm{GNAR}(3, \{[0, 0, 0]\}, [3])$ to the one-lag differenced data. The results follow in Table \ref{tab: usa vote one-lag diff estimate}.

\begin{table}[H]
    \begin{tabular}{cccccccccc}
        & $\hat{\alpha}_{21}$ & $\hat{\alpha}_{21}$ & $\hat{\alpha}_{31}$ & $\hat{\alpha}_{12}$ & $\hat{\alpha}_{22}$ & $\hat{\alpha}_{32}$ & $\hat{\alpha}_{13}$ & $\hat{\alpha}_{23}$ & $\hat{\alpha}_{33}$ \\ 
        \hline
        Estimate & -0.081 & -0.538 & -0.303 & -0.139 & -0.571 & -0.251 & 0.029 & -0.582 & -0.178 \\
        Std. Err. & 0.086 & 0.064 & 0.082 & 0.076 & 0.055 & 0.069 & 0.08 & 0.054 & 0.072 \\
    \end{tabular}
    \caption{Estimated coefficients for the standardised (i.e., $(X_{i, t} - \overline{X}_i) \{ \sum_{t = 1}^{11} (X_{i, t} - \overline{X}_{i} )^2  \}^{-1 / 2}) $ one-lag differenced network time series $(\boldsymbol{X}_{t}, \mathcal{G})$ of vote percentages for the Republican nominee in presidential elections in the USA from 1980 to 2020. The fit is a \mbox{community-$\alpha$} $\mathrm{GNAR} (3, \{[0, 0, 0]\}, [3])$. }
\label{tab: usa vote one-lag diff estimate}
\end{table}

Interestingly, one-lag differencing appears to remove network effects, and the estimated coefficients satisfy our stationarity assumptions. Therefore, it appears that there are community-wise trends, however, these trends appear to be dominated by the system change at the second lag (two new candidates running for president). This is the strong global effect. Thus, there appears to be global and communal effects in the network. Future work will examine the interactions between different communities and how this might help explain the relationship between global and communal effects.

\begin{figure}[H]
    \centering
    \includegraphics[scale=0.45]{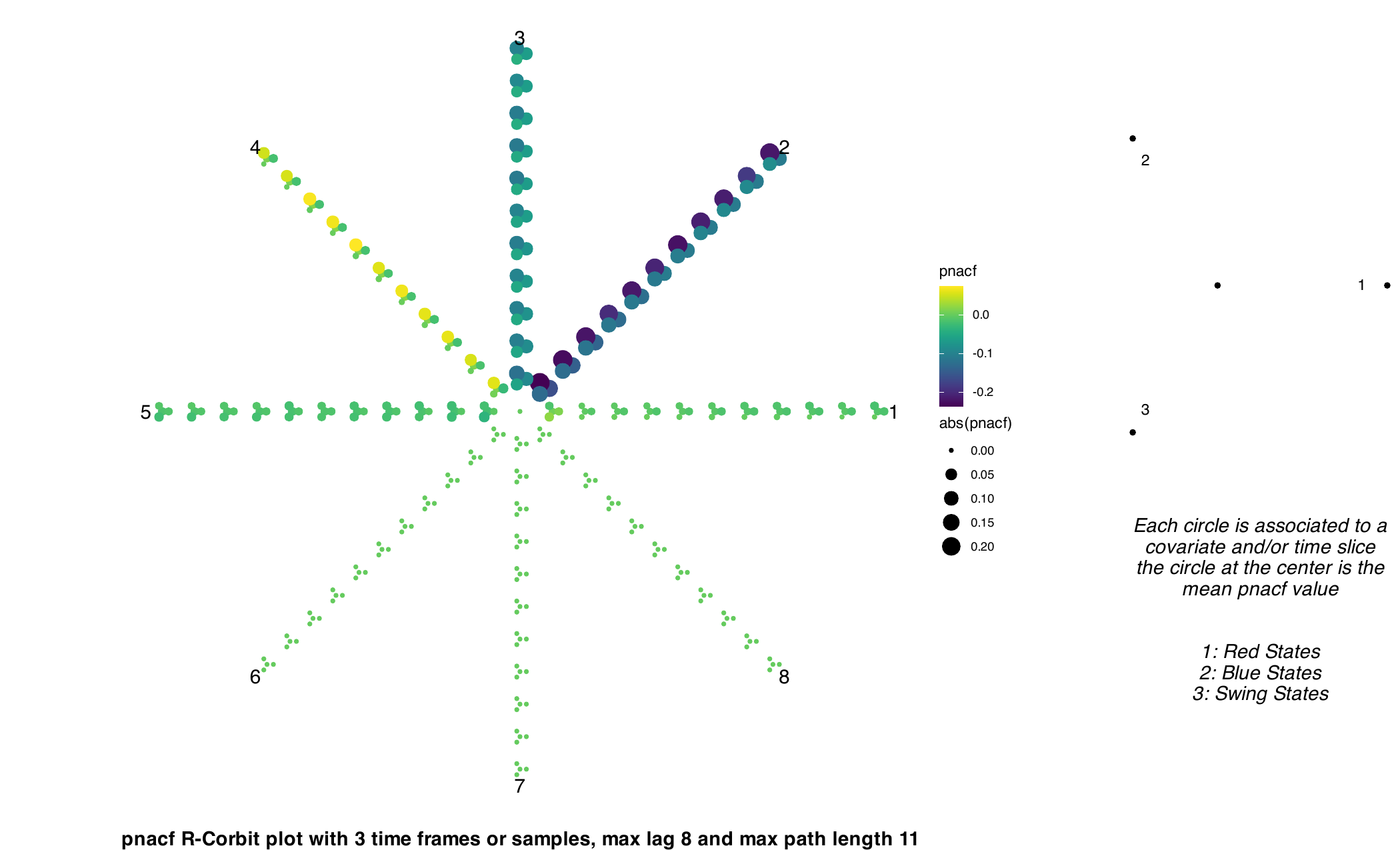}
    \caption{R-Corbit plot for the one-lag differenced network time series $\boldsymbol{X}_t - \boldsymbol{X}_{t - 1}$. Each $X_{i, t}$ is the difference in the percentage of votes for the Republican nominee in the current election minus the previous one from 1980 to 2020. The communities are {\em Red, Blue} and {\em Swing} states; see Figure \ref{fig: usa net}.}
    \label{fig: r-corbit pnacf one-lag diff}
\end{figure}

\section{Conclusion}
\label{sec: conclusion}
We have presented the \mbox{community}-$\alpha$ GNAR model, its parsimonious framework allows analysing high-dimensional (network) time series data, and can detect interesting community dynamics, see, e.g., Section \ref{sec: application}. However, it requires knowledge of network communities, and assumes stationarity. Future work will focus on extending the methodology and a more thorough analysis of the electoral data. We mention some interesting methodological developments. Note that community-wise correlation structures suggest a natural clustering algorithm. If we assume that there is a certain number number of communities in the network, then clustering a network time series could be done by assigning each node to the community that minimises some criteria based on the network autocorrelation function, which takes into account weights, network structure and communal autoregressive effects. Further, extending GNAR to nonstationary (bio)spatio-temporal settings could enable modelling of complex mixed type data. The code for replicating the study and model fitting will be added to the CRAN \texttt{GNAR} package in due course.

\section*{Acknowledgments}
We gratefully acknowledge the following support: Nason from EPSRC NeST Programme grant EP/X002195/1; 
Salnikov from the UCL Great Ormond Street Institute of Child Health, NeST, Imperial College London, 
the Great Ormond Street Hospital DRIVE Informatics Programme and the Bank of Mexico.
Cortina-Borja supported by the NIHR Great Ormond Street Hospital Biomedical Research Centre. The views expressed are those of the authors and not necessarily those of the EPSRC.

\bibliographystyle{agsm}
\bibliography{references}

\section*{Appendix}
\subsection*{Stationarity conditions for \texorpdfstring{\mbox{community-$\alpha$}}{} GNAR models}
We will use the following result of \cite{gnar_paper}.
\begin{theorem}{[\textbf{\cite{gnar_paper}}]} \label{th: stationary cond}
Let $\boldsymbol{X}_t$ be a \mbox{local-$\alpha$} $\GNAR (p, [s_1, \dots, s_p] )$ process with associated static network $\mathcal{G} = (\mathcal{K}, \mathcal{E})$. Further, assume that the autoregressive coefficients in 
$$ X_{i, t} = \sum_{k = 1}^{p} \left ( \alpha_{i, k} X_{i, t - k} + \sum_{r = 1}^{s_k} \beta_{kr} Z_{i, t - k}^{r} \right ) 
    + u_{i, t},$$ 
where $u_{i, t}$ are IID white noise, satisfy
$$
    \sum_{k = 1}^{p} \left ( |\alpha_{i, k}| + \sum_{r = 1}^{s_k} |\beta_{k r} | \right ) < 1
$$
for all $X_{i, t}$, $i \in \{1, \dots, d\}$. Then $\boldsymbol{X}_t$ is stationary.
\end{theorem}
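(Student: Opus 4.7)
The plan is to recast the scalar local-$\alpha$ GNAR equation in vector form as a $\mathrm{VAR}(p)$, then invoke the standard stationarity criterion for VAR processes (the determinant of $\mathbf{I}_d - \sum_{k=1}^p \mathbf{\Phi}_k z^k$ must not vanish on the closed unit disk), and verify that criterion via strict diagonal dominance (Gershgorin).

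First I would collect the neighbourhood regressions $Z_{i,t-k}^r = \sum_{j:\,d(i,j)=r} w_{ij} X_{j,t-k}$ into matrix form. This yields $\boldsymbol{X}_t = \sum_{k=1}^p \mathbf{\Phi}_k \boldsymbol{X}_{t-k} + \boldsymbol{u}_t$ with
\[
\mathbf{\Phi}_k = \operatorname{diag}(\alpha_{1,k},\dots,\alpha_{d,k}) + \sum_{r=1}^{s_k} \beta_{k r}\,(\mathbf{W}\odot\mathbf{S}_r),
\]
so the $(i,j)$ off-diagonal entry of $\mathbf{\Phi}_k$ equals $\beta_{k,d(i,j)} w_{ij}$ when $d(i,j)\le s_k$ and is zero otherwise. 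By the classical VAR stationarity theorem it suffices to show that $\mathbf{A}(z) := \mathbf{I}_d - \sum_{k=1}^p \mathbf{\Phi}_k z^k$ is nonsingular for every complex $z$ with $|z|\le 1$; this yields a causal MA$(\infty)$ representation with absolutely summable coefficient matrices and hence the desired stationary solution.

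Second, I would apply Gershgorin's theorem to $\mathbf{A}(z)$ for an arbitrary $|z|\le 1$. The $i$th diagonal entry $1-\sum_{k=1}^p \alpha_{i,k} z^k$ has modulus at least $1-\sum_{k=1}^p |\alpha_{i,k}|$, while the $i$th off-diagonal row sum is bounded by
\[
\sum_{k=1}^p |z|^k \sum_{r=1}^{s_k} |\beta_{k r}| \sum_{j:\,d(i,j)=r} w_{ij} \;\le\; \sum_{k=1}^p \sum_{r=1}^{s_k} |\beta_{k r}|,
\]
using $|z|\le 1$ together with the GNAR weight normalisation $\sum_{j:\,d(i,j)=r} w_{ij}\le 1$ described in Section \ref{subsec: review of GNAR models}. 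The hypothesis $\sum_{k=1}^p\bigl(|\alpha_{i,k}|+\sum_{r=1}^{s_k}|\beta_{k r}|\bigr)<1$ then gives strict row-diagonal dominance uniformly in $i$ and in $z$ with $|z|\le 1$, so $\mathbf{A}(z)$ is invertible, $\det\mathbf{A}(z)\ne 0$, and stationarity follows.

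The main obstacle is the bookkeeping around the weight normalisation: one must confirm that per-neighbourhood sums $\sum_{j:\,d(i,j)=r} w_{ij}$ are bounded by one (which holds in the equal-weights default, $w_{ij}=|\mathcal{N}_r(i)|^{-1}$, and is the natural convention for general weights). If a user supplies unnormalised weights the inequality above must be stated with the maximum $r$-stage row-weight in place of $1$, and the hypothesis strengthened accordingly; otherwise the Gershgorin step goes through verbatim.
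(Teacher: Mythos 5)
This paper does not prove Theorem~\ref{th: stationary cond} itself; it imports it verbatim from \cite{gnar_paper} and only proves the corollary that follows from it. Your reconstruction is correct and is essentially the argument used in that cited source: pass to the $\mathrm{VAR}(p)$ form with $\mathbf{\Phi}_k = \operatorname{diag}(\alpha_{1,k},\dots,\alpha_{d,k}) + \sum_{r=1}^{s_k}\beta_{kr}(\mathbf{W}\odot\mathbf{S}_r)$, and show $\det\bigl(\mathbf{I}_d - \sum_{k=1}^p\mathbf{\Phi}_k z^k\bigr)\neq 0$ on the closed unit disk via strict row-diagonal dominance. The caveat you flag is resolved by the GNAR convention itself: weights are normalised within each $r$-stage neighbourhood so that $\sum_{j\in\mathcal{N}_r(i)}w_{ij}=1$ (as in the equal-weight default $w_{ij}=|\mathcal{N}_r(i)|^{-1}$ described in Section~\ref{subsec: review of GNAR models}), so the Gershgorin bound holds without strengthening the hypothesis.
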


We proceed to show the following. 
\begin{corollary}
Let $\boldsymbol{X}_t$ be a $\comGNAR$ $\GNAR\left ( [p_c], \{[s_{k_c}] \}, [C] \right )$ process such that the autoregressive coefficients in \eqref{eq: community GNAR structural representation} satisfy 
$$
    \sum_{k = 1}^{p_c} \left ( |\alpha_{k_c, c} | + \sum_{r = 1}^{s_{k_c}} |\beta_{k_c, r, c} | \right ) < 1
$$
for all covariates $c \in [C]$. Then $\boldsymbol{X}_t$ is stationary.
\end{corollary}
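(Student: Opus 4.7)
The plan is to reduce the corollary to Theorem~\ref{th: stationary cond} by showing that the \mbox{community-$\alpha$} model decouples across communities, so that stationarity of the vector process follows from stationarity of $C$ sub-processes to which Theorem~\ref{th: stationary cond} applies directly. The key observation, already implicit in Remark~\ref{rem: VAR representation}, is that each $\mathbf{\Phi}_k$ in \eqref{eq: var autoregressive matrices} is a sum over $c$ of terms whose only non-zero entries lie in positions $(i,j)$ with $i,j \in K_c$: $\mathrm{diag}(\alpha_{k,c}\boldsymbol{\xi}_c)$ is supported on the diagonal indices in $K_c$, while $\mathbf{W}_c \odot \mathbf{S}_r$ is supported in $K_c \times K_c$ by the definition of $\mathbf{W}_c$. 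Because the $K_c$ partition $\mathcal{K}$, after permuting node indices to group them by community every $\mathbf{\Phi}_k$ is block-diagonal, and the components $\{X_{i,t}\}_{i \in K_c}$ evolve as an autonomous sub-process driven only by $\{u_{i,t}\}_{i \in K_c}$.

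Concretely, for $i \in K_c$ I would expand \eqref{eq: community GNAR structural representation} component-wise using $\xi_{i,\tilde c} = 0$ whenever $\tilde c \neq c$; the only surviving terms come from $\tilde c = c$, giving
\begin{equation*}
X_{i,t} = \sum_{k=1}^{p_c} \left ( \alpha_{k,c} X_{i,t-k} + \sum_{r=1}^{s_k(c)} \beta_{k,r,c} \sum_{j \in K_c \cap \mathcal{N}_r(i)} w_{ij} X_{j,t-k} \right ) + u_{i,t}.
\end{equation*}
This is exactly the equation of a local-$\alpha$ GNAR process on the node set $K_c$ with node-wise coefficients $\alpha_{i,k} := \alpha_{k,c}$ and shared neighbourhood coefficients $\beta_{kr} := \beta_{k,r,c}$, both constant across $i \in K_c$ (in fact a global-$\alpha$ GNAR restricted to $K_c$). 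Applying Theorem~\ref{th: stationary cond} to this sub-process, its hypothesis $\sum_{k=1}^{p_c}(|\alpha_{i,k}| + \sum_{r=1}^{s_k(c)}|\beta_{kr}|) < 1$ for every $i \in K_c$ becomes exactly the assumed $\sum_{k=1}^{p_c}(|\alpha_{k,c}| + \sum_{r=1}^{s_k(c)} |\beta_{k,r,c}|) < 1$. Hence each sub-process is stationary, and stationarity of $\boldsymbol{X}_t$ follows by concatenating the $C$ independent stationary sub-processes, since the driving noises $\{u_{i,t}\}_{i \in K_c}$ are i.i.d.\ within each community and independent across communities.

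The only subtlety I foresee is cosmetic: the "network" seen by the sub-process on $K_c$ uses the $r$-stage neighbourhoods $K_c \cap \mathcal{N}_r(i)$ inherited from $\mathcal{G}$, rather than those of the induced subgraph $\mathcal{G}[K_c]$, so strictly speaking the sub-process is not literally a GNAR on $\mathcal{G}[K_c]$. This is not a real obstacle, because the stationarity conclusion of Theorem~\ref{th: stationary cond} depends only on absolute row-sum bounds of the weighted adjacency matrices $\mathbf{W}_c \odot \mathbf{S}_r$, and these bounds are inherited by restriction from those on $\mathbf{W} \odot \mathbf{S}_r$. Equivalently, one may construct an auxiliary graph on $K_c$ whose $r$-stage adjacency reproduces $\mathbf{S}_r$ restricted to $K_c \times K_c$ and invoke Theorem~\ref{th: stationary cond} verbatim. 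Either route delivers the community-wise bound in the corollary as precisely the condition required.
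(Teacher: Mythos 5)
Your proposal is correct, and it rests on the same essential reduction as the paper's proof: identify the node-wise coefficients $\alpha_{i,k}=\alpha_{k,c}\,\mathbb{I}(i\in K_c)$ and $\beta_{i,k,r}=\beta_{k,r,c}\,\mathbb{I}(i\in K_c)$ and observe that the node-wise absolute-sum bound required by Theorem~\ref{th: stationary cond} is exactly the assumed community-wise bound. Where you differ is in the packaging. The paper pads every community to a common maximum lag $p=\max(p_c)$ and stage depth $s_k=\max(s_{k_c})$ with zero coefficients and applies Theorem~\ref{th: stationary cond} \emph{once}, to the full $d$-dimensional process viewed as a single local-$\alpha$ GNAR on $\mathcal{G}$ whose neighbourhood regressions happen to be the community-restricted $Z^{r,c}_{i,t-k}$; no decoupling or independence argument is needed. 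You instead exploit the block-diagonal structure of the $\mathbf{\Phi}_k$ to split $\boldsymbol{X}_t$ into $C$ autonomous sub-processes, apply the theorem to each, and concatenate --- which requires the correct but additional observation that the sub-processes are driven by independent noises, so that joint stationarity follows from the marginal stationarities. Your route buys a cleaner structural insight (the communities genuinely do not interact in this model) at the cost of the concatenation step; the paper's route is shorter but leaves that insight implicit. You are also more careful than the paper about one point: the restricted weights $w_{ij}\,\mathbb{I}(j\in K_c)$ need not sum to one over $\mathcal{N}_r(i)$, so neither version is literally a GNAR with normalised weights; both proofs implicitly rely on the fact that Theorem~\ref{th: stationary cond} only needs the weight row sums to be bounded by one, a property preserved under restriction, and your explicit acknowledgement of this closes a small gap the paper glosses over.
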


\begin{proof}
Notice that by Theorem \ref{th: stationary cond} each $c$-community \mbox{global-$\alpha$} GNAR process within curly brackets in \eqref{eq: community GNAR structural representation} on the right-hand side is stationary if the coefficients for community $K_c$ satisfy 
\begin{equation*}
    \sum_{k = 1}^{p_c} \left ( |\alpha_{k_c, c} | + \sum_{r = 1}^{s_{k_c}} |\beta_{k_c, r, c} | \right ) < 1,
\end{equation*}
where  $\alpha_{k_c, c}$, $\beta_{k_c, r, c}$ are the model coefficients, $k_c \in \{1, \dots, p_c\}$ is the current lag, $s_{k_c} \in \{1, \dots, r^*_c\}$ is the maximum $\rstage$ depth at lag $k_c$, $p_c \in \mathbb{Z}^+$ and $r^*_c$ are the maximum lag and $\rstage$ depth for community $K_c$, and $\boldsymbol{u}_{t}$ are IID white noise with $\mathrm{cov} (\boldsymbol{u}_t) = \sigma^2 \bmat{I_d}$ and $\sigma^2 > 0$. We express the sum of stationary community processes as a \mbox{local-$\alpha$} GNAR process as follows.
\par
Let $\alpha_{i, k} = \alpha_{k_c, c} \mathbb{I} (i \in K_c)$, where we set $\alpha_{k_c, c} = 0$ if $k_c > p_c$, and $\beta_{i, k, r} = \beta_{k_c, r, c} \mathbb{I} (i \in K_c)$, where we set $\beta_{k_c, r, c} = 0$ if $k_c > p_c$ or $ s_{k_c} > r^*_c$. Hence, the community with the largest lag will have the biggest number of non-zero autoregressive coefficients, and might have the most non-zero $\rstage$ neighbourhood coefficients. By construction we have that for all $X_{i, t}$ such that $i \in K_c$ the $\rstage$ neighbourhood coefficients are equal for all lags and at all $\rstage$s (i.e., if $i \in K_c$, then $\beta_{i, k, r} = \beta_{k_c, r, c}$ otherwise $\beta_{i, k, r} = 0$). Set $p = \max ( [p_c])$ and $s_k = \max ([s_{k_c}]) $ for each $k = 1, \dots, p$. Then, the node-wise representation of the $\comGNAR$ GNAR model is given by
\begin{equation} \label{eq: node wise com ganr}
    X_{i, t} = \sum_{k = 1}^{p} \left ( \alpha_{i, k} X_{i, t - k} + \sum_{r = 1}^{s_k} \beta_{i, k, r} Z_{i, t - k}^{r, c} \right ) 
    + u_{i, t},
\end{equation}
where $u_{i, t}$ are IID white noise and $Z_{i, t - k}^{r, c}$ are the $c$-community $\rstage$ neighbourhood regressions. For all lags $k \in \{1, \dots, p\}$ we have that for each $r \in \{1, \dots, s_k\}$ the node-wise $\rstage$ neighbourhood regression coefficients for $X_{i, t}$ satisfy
$$
    \sum_{r = 1}^{s_k} |\beta_{i, k, r} | = \sum_{r = 1}^{s_k} |\beta_{k_c, r, c}| \mathbb{I}(i \in K_c),
$$
and that the autoregressive coefficients satisfy
$$
    \sum_{k = 1}^p |\alpha_{i, k}| = \sum_{k = 1}^p |\alpha_{k_c, p}| \mathbb{I} (i \in K_c), 
$$
thus, the coefficients for each nodal time series $X_{i, t}$ satisfy
$$
    \sum_{k = 1}^p \left ( |\alpha_{i, k}| +  \sum_{r = 1}^{s_k} |\beta_{i, k, r} | \right ) = \sum_{k = 1}^p \left \{ |\alpha_{k_c, p}| \mathbb{I} (i \in K_c) + \sum_{r = 1}^{s_k} |\beta_{k_c, r, c}| \mathbb{I}(i \in K_c) \right \}.
$$
Assume that $\sum_{k = 1}^{p_c} \left ( |\alpha_{k_c, c} | + \sum_{r = 1}^{s_{k_c}} |\beta_{k_c, r, c} | \right ) < 1$ for all covariates $c \in [C]$. Combining this with the above we have that
$$
     \sum_{k = 1}^p \left ( |\alpha_{i, k}| +  \sum_{r = 1}^{s_k} |\beta_{i, k, r} | \right ) < 1
$$
for all $i \in \{1, \dots, d\}$. Therefore, by Theorem \ref{th: stationary cond} the process given by \eqref{eq: node wise com ganr} is stationary. Recalling that \eqref{eq: node wise com ganr} is identical to the process given by \eqref{eq: community GNAR structural representation} finishes the proof.
\end{proof}

\subsection*{Supplementary plots}
\subsubsection*{Supplementary plots for the presidential election network time series}
\begin{figure}[H]
    \centering
    \includegraphics[scale=0.4]{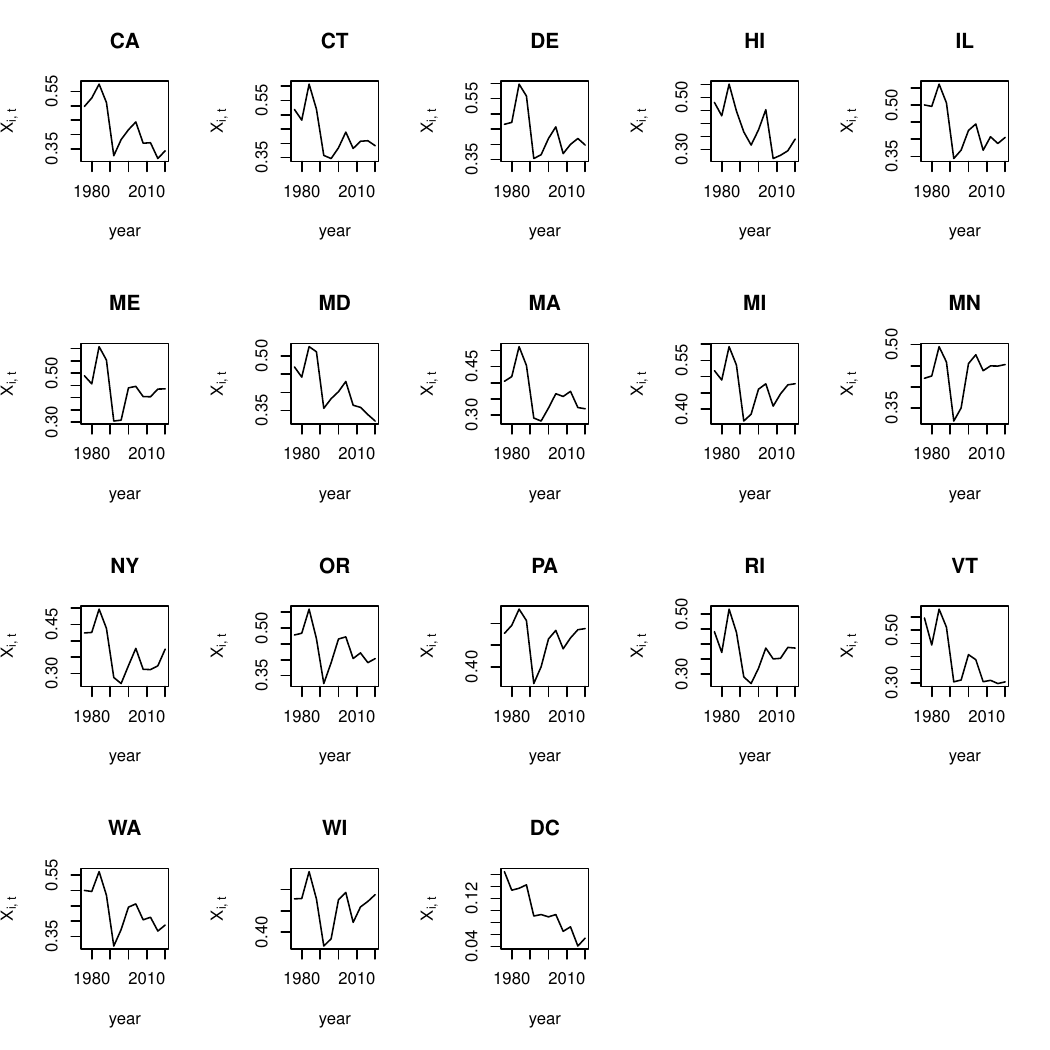}
    \caption{{\em Blue} states from 1976 to 2020.}
    \label{fig:Blue states}
\end{figure}

\begin{figure}[H]
    \centering
    \includegraphics[scale=0.4]{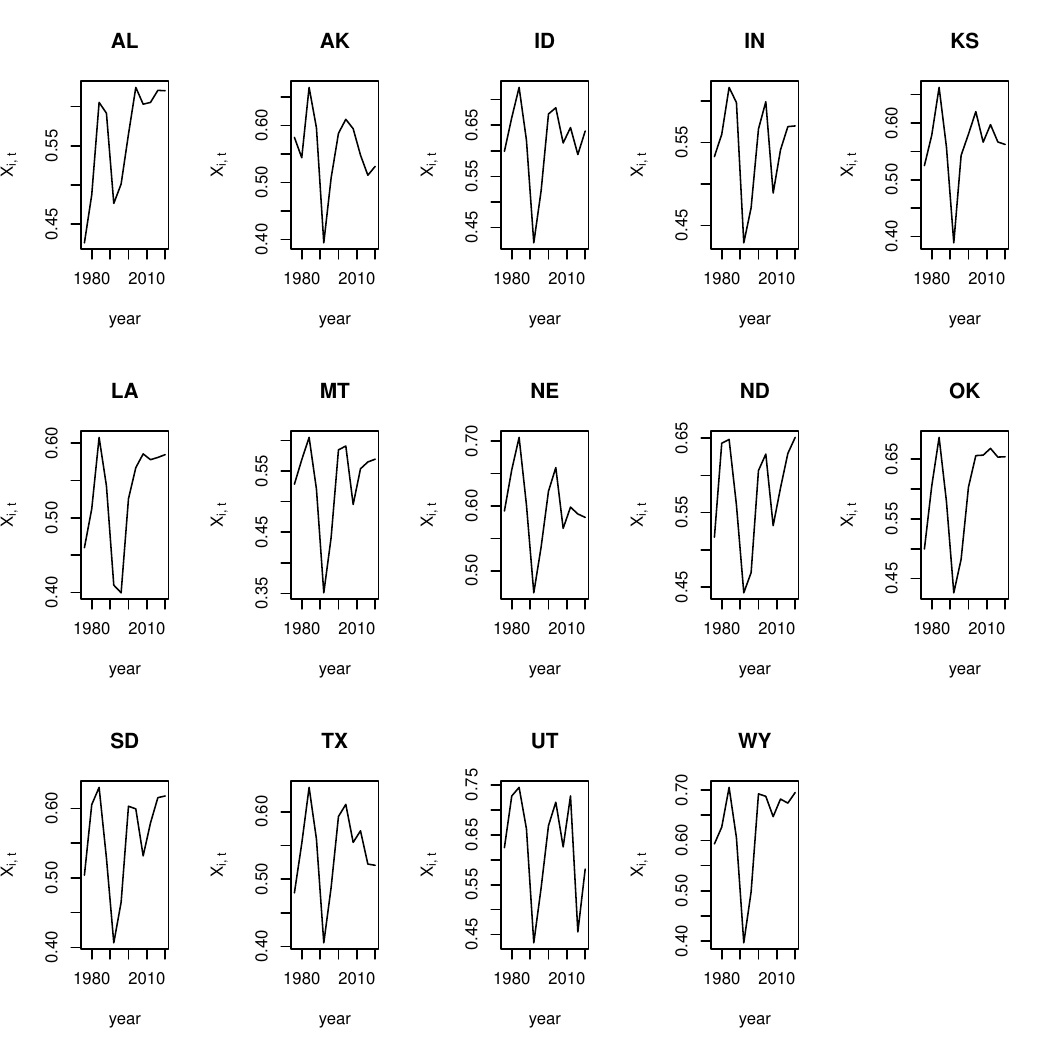}
    \caption{{\em Red} states from 1976 to 2020.}
    \label{fig: Red states}
\end{figure}

\begin{figure}[H]
    \centering
    \includegraphics[scale=0.4]{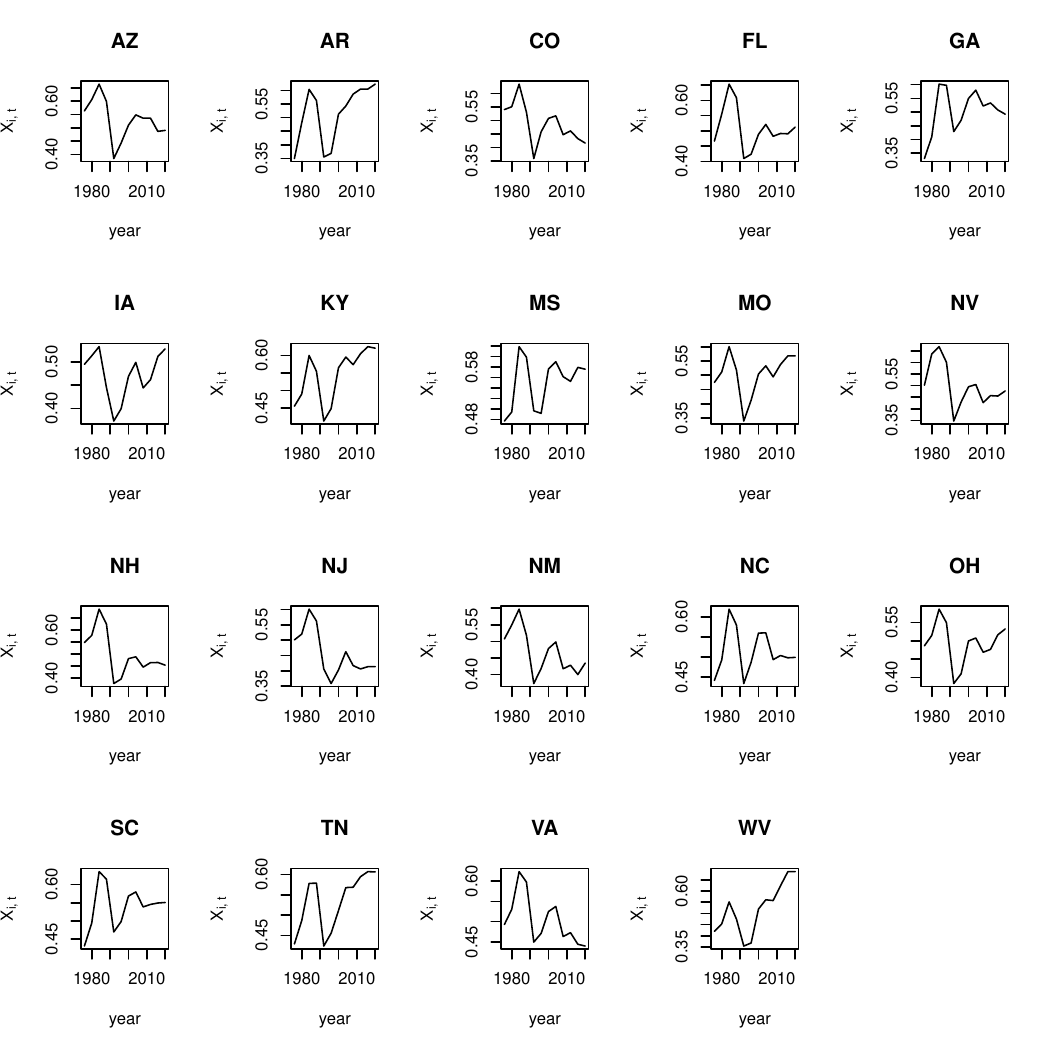}
    \caption{{\em Swing} States from 1976 to 2020.}
    \label{fig: swing states}
\end{figure}

\begin{figure}[H]
    \centering
    \includegraphics[scale=0.35]{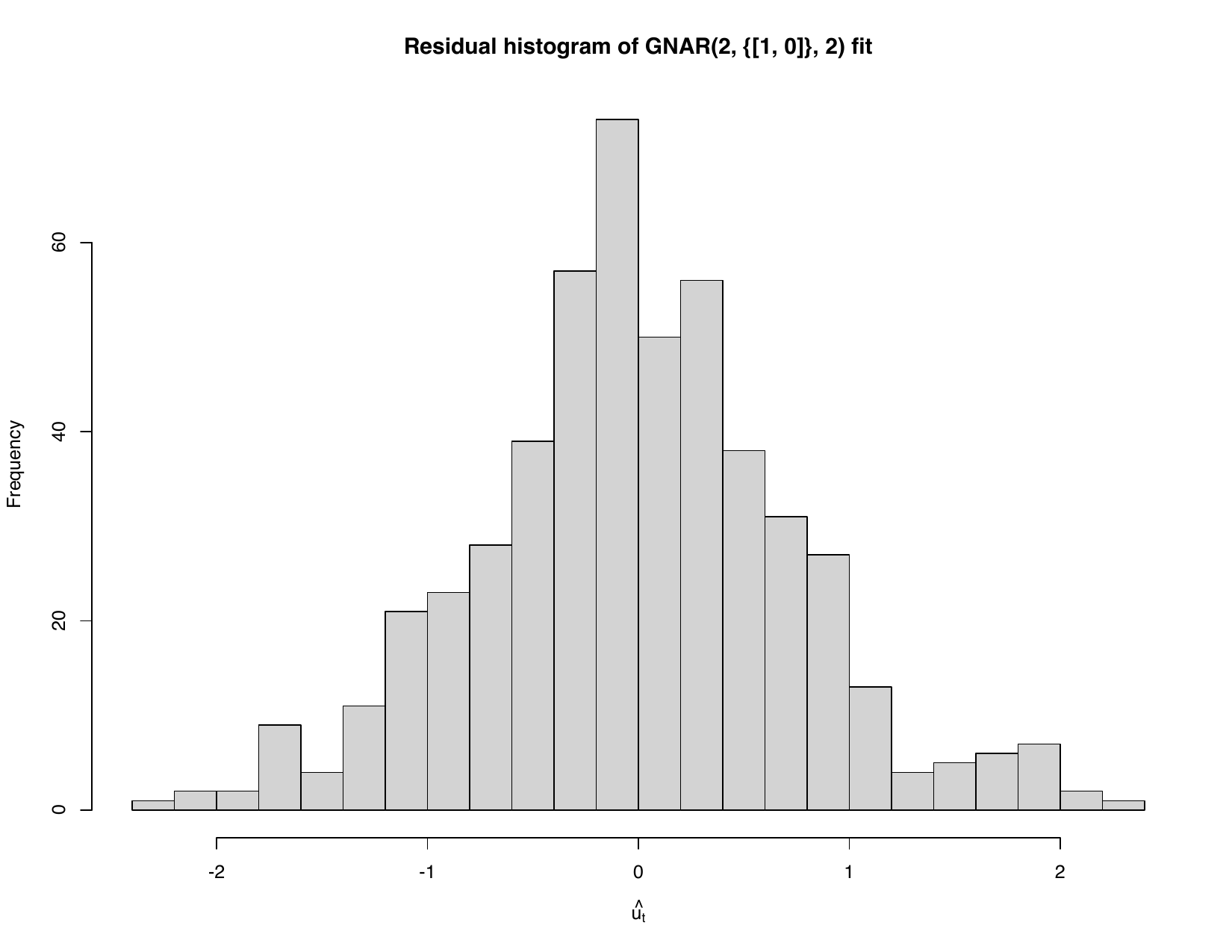}
    \caption{Residual histogram for a $\GNAR(2, \{[1, 0]\}, 3)$ fit to the presidential election network time series.}
    \label{fig: residual non-differenced}
\end{figure}

\begin{figure}[H]
    \centering
    \includegraphics[scale=0.35]{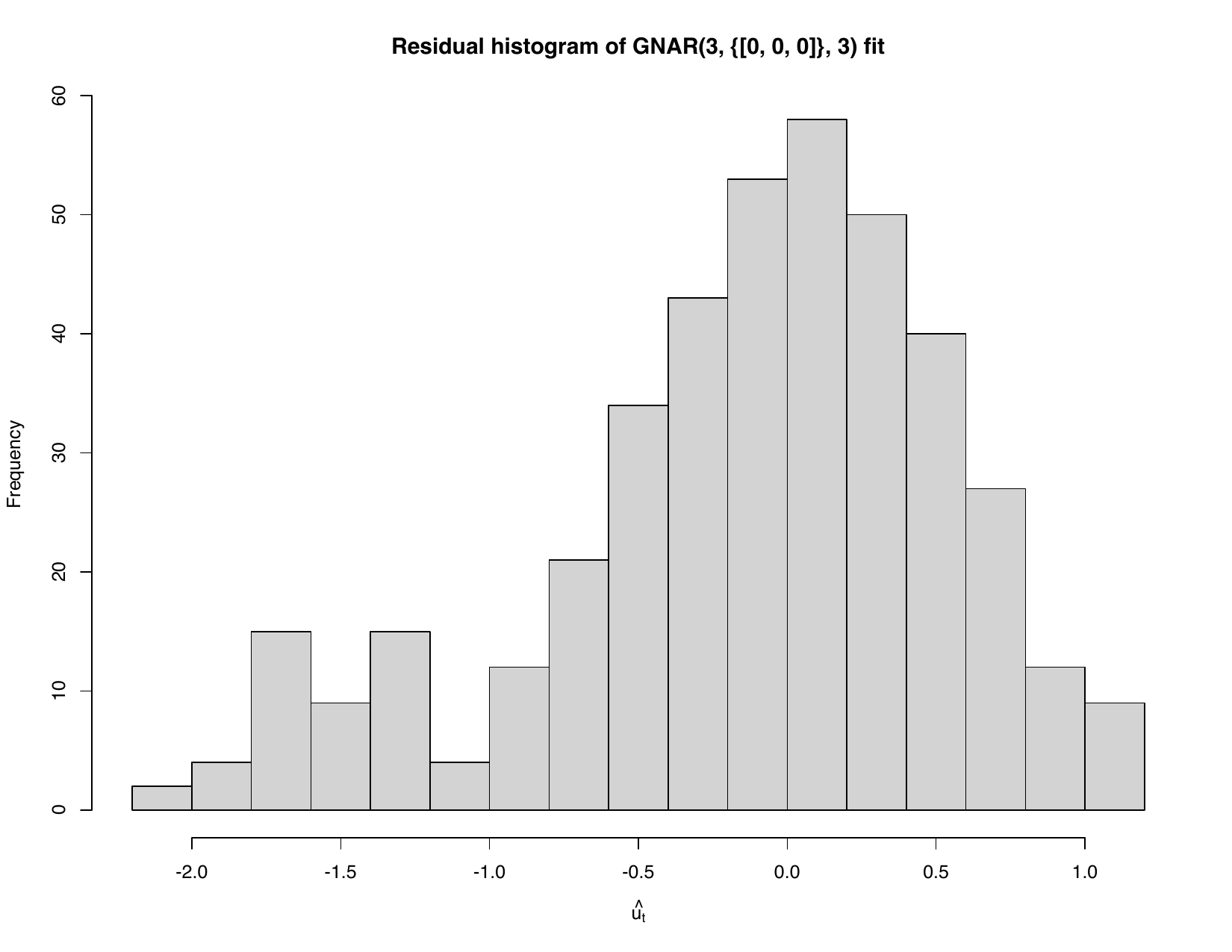}
    \caption{Residual histogram for a $\GNAR(3, \{[0, 0, 0]\}, 3)$ fit to the one-lag differenced presidential election network time series.}
    \label{fig: residual differenced}
\end{figure}
Interestingly, the residuals for the non-differenced data are symmetric around zero and appear to decay in a bell-shape. Also, the residuals for the differenced data are not symmetric, however, it appears that some outliers are making the residual distribution asymmetric. The histogram in Figure \ref{fig: residual differenced} resembles a bell-curve if we omit values that might be outliers (i.e., $|\hat{u}_t| > 1.35$). A more careful analysis might reveal interesting properties.

\begin{figure}[H]
        \centering
        \begin{subfigure}{0.48\textwidth}
            \includegraphics[width=\textwidth]{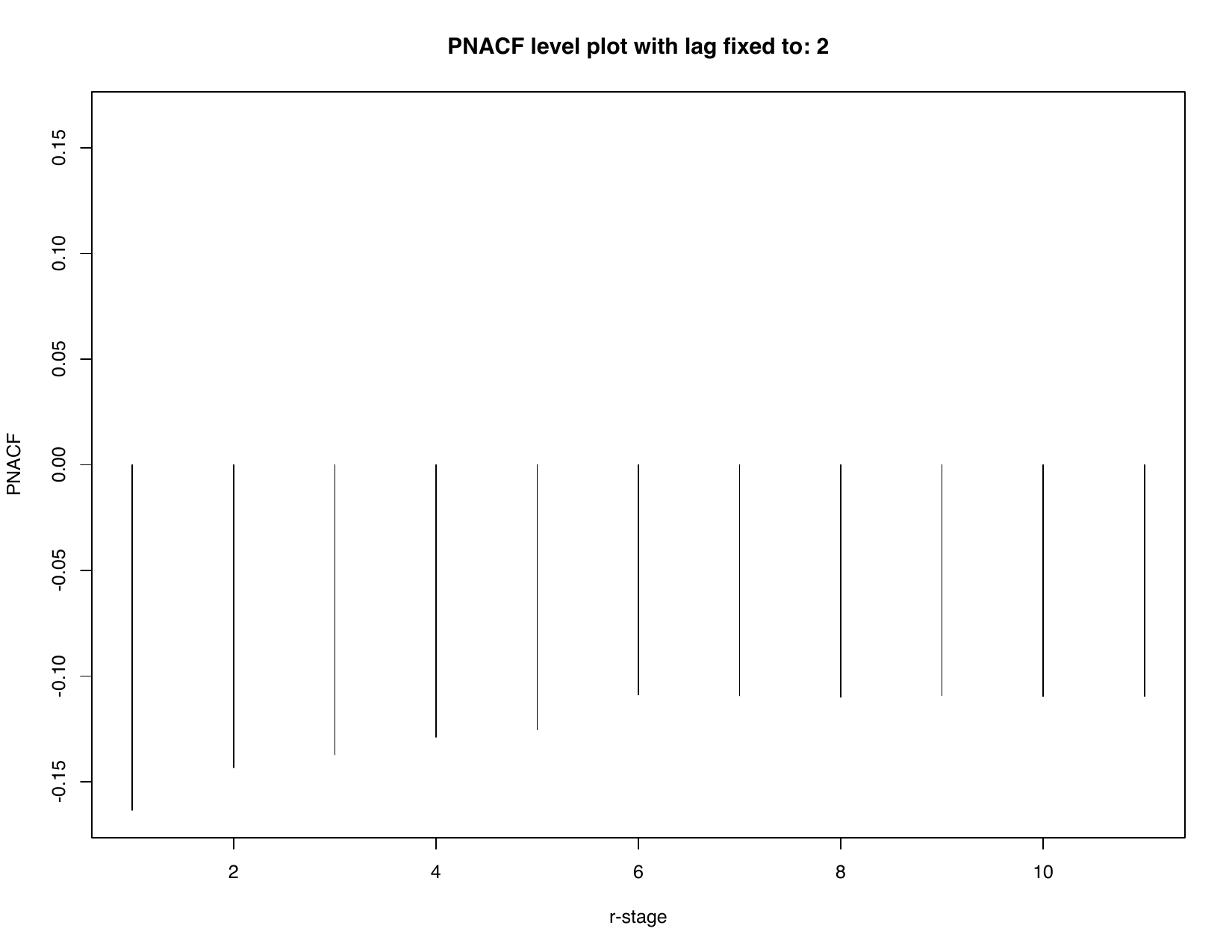}
            \caption{}
            \label{fig: pnacf level red lag 2}
        \end{subfigure}
        \hfill
        \begin{subfigure}{0.48\textwidth}
            \includegraphics[width=\textwidth]{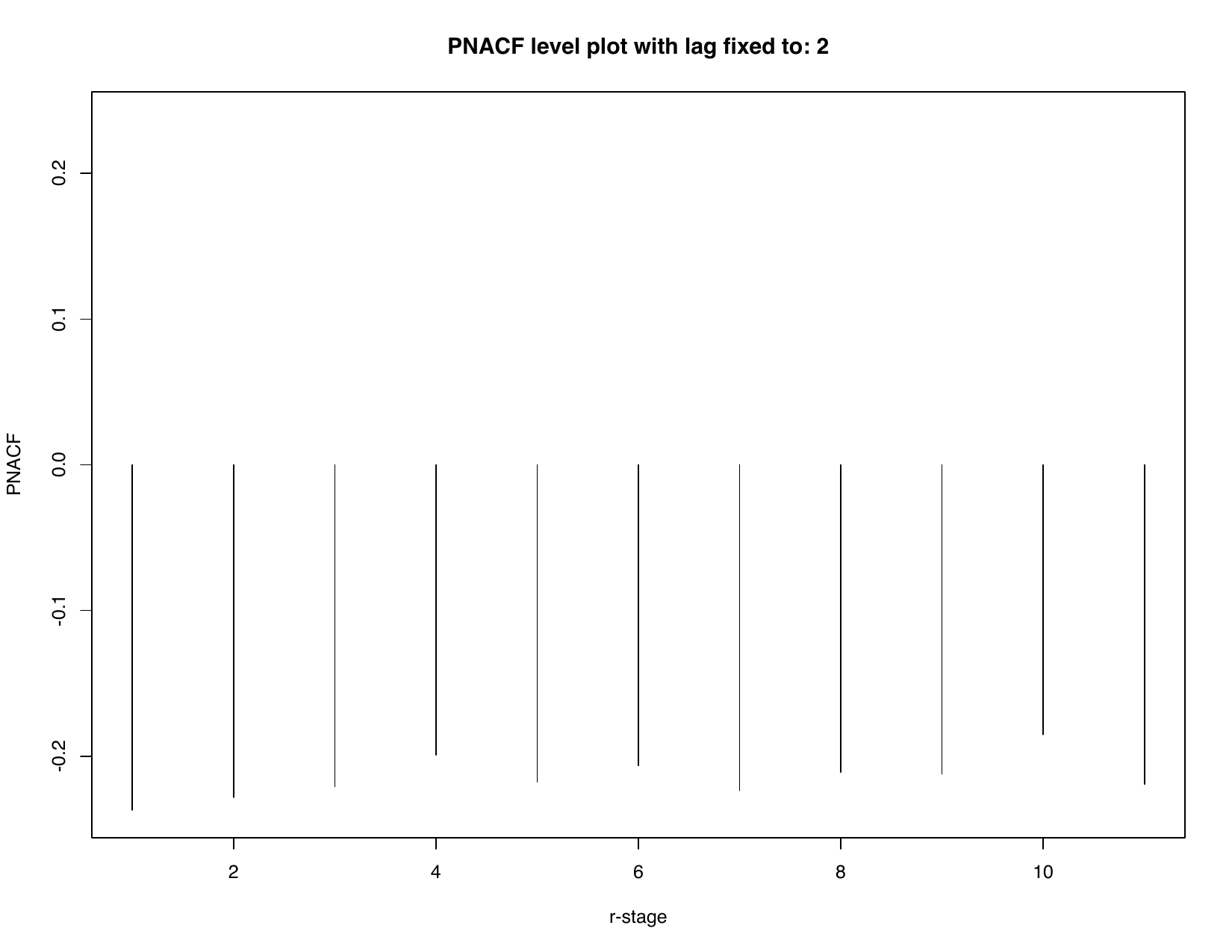}
            \caption{}
            \label{fig: pnacf level blue lag 2}
        \end{subfigure}
        \hfill
        \begin{subfigure}{0.48\textwidth}
            \includegraphics[width=\textwidth]{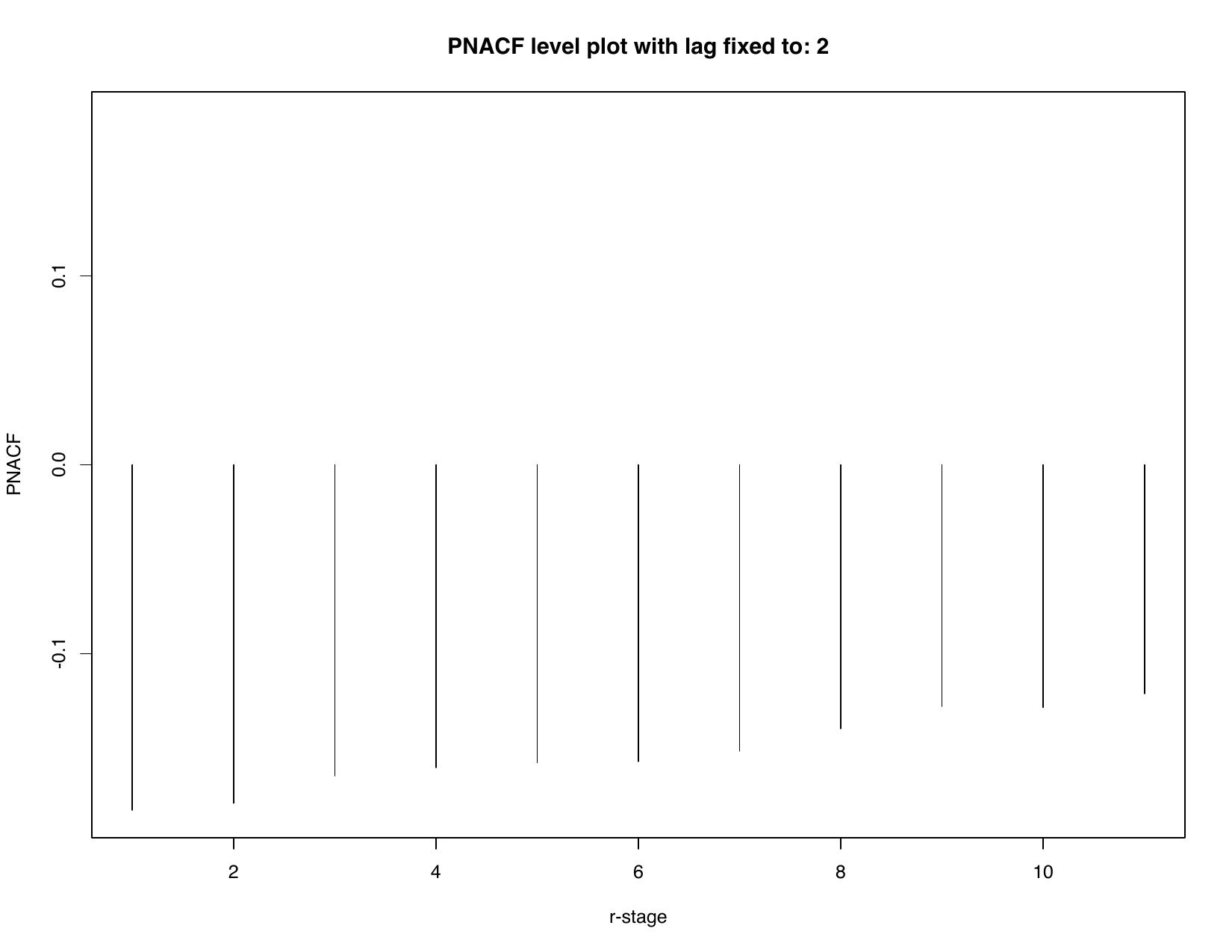}
            \caption{}
            \label{fig: pnacf level swing lag 2}
        \end{subfigure}
        \hfill
        \caption{
            PNACF level plots with fixed lag, each plot shows the PNACF for $\rstage$s one thru eleven with lag fixed to two. Figure \ref{fig: pnacf level red lag 2} is for lag fixed to one and {\em Red} states. Figure \ref{fig: pnacf level blue lag 2} is for lag fixed to one and {\em Blue} states. Figure \ref{fig: pnacf level swing lag 2} is for lag fixed to two and {\em Swing} states. The data are the 11 one-lag differenced presidential elections in the USA from 1980 to 2020; see Figure \ref{fig: usa net}.
        }
        \label{fig: pnacf USA level plots}
\end{figure}

\subsubsection*{Supplementary plots for the simulation in Section \ref{sec: model selection}}
\begin{figure}[H]
        \centering
        \begin{subfigure}{0.48\textwidth}
            \includegraphics[width=\textwidth]{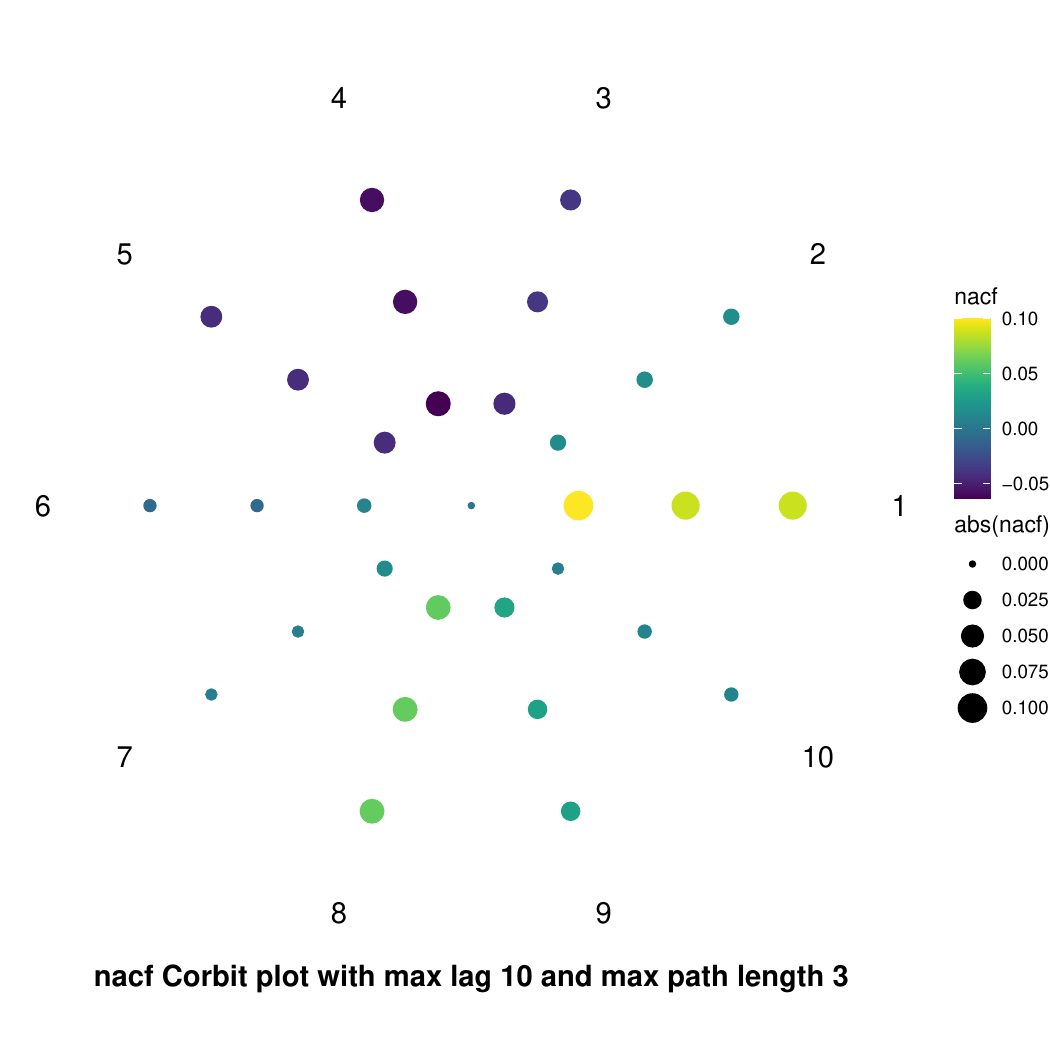}
            \caption{}
            \label{fig: corbit nacf 1}
        \end{subfigure}
        \hfill
        \begin{subfigure}{0.48\textwidth}
            \includegraphics[width=\textwidth]{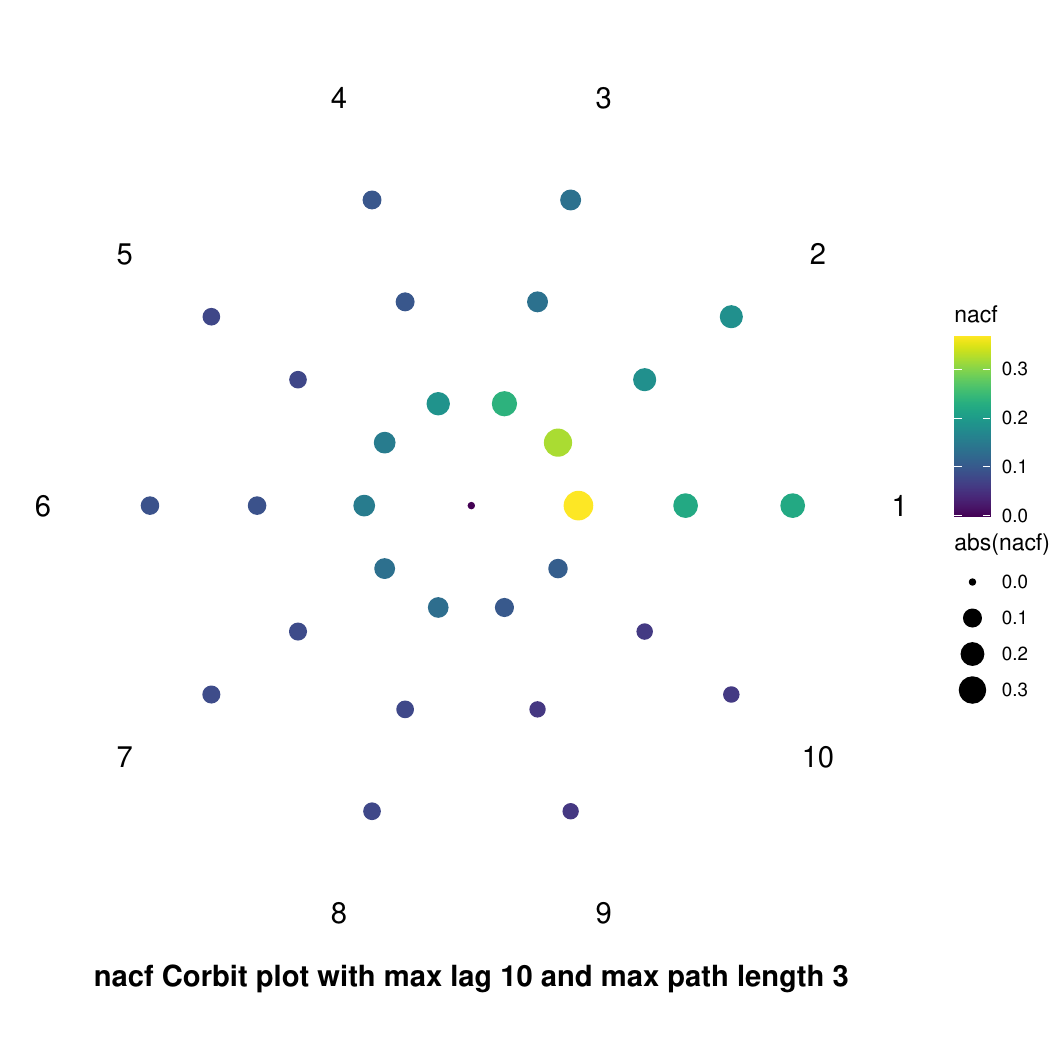}
            \caption{}
            \label{fig: corbit nacf 2}
        \end{subfigure}
        \hfill
        \begin{subfigure}{0.48\textwidth}
            \includegraphics[width=\textwidth]{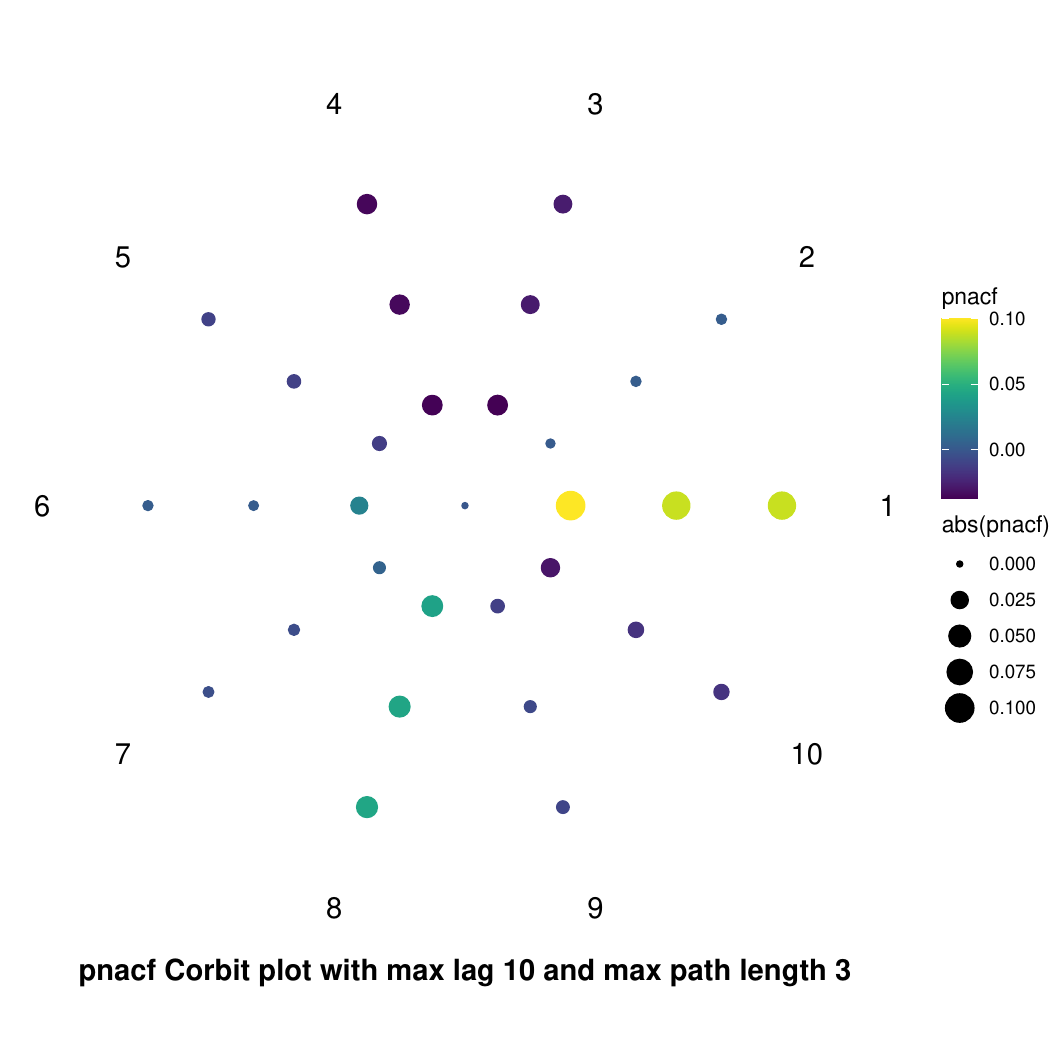}
            \caption{}
            \label{fig: corbit pnacf 1}
        \end{subfigure}
        \hfill
        \begin{subfigure}{0.48\textwidth}
            \includegraphics[width=\textwidth]{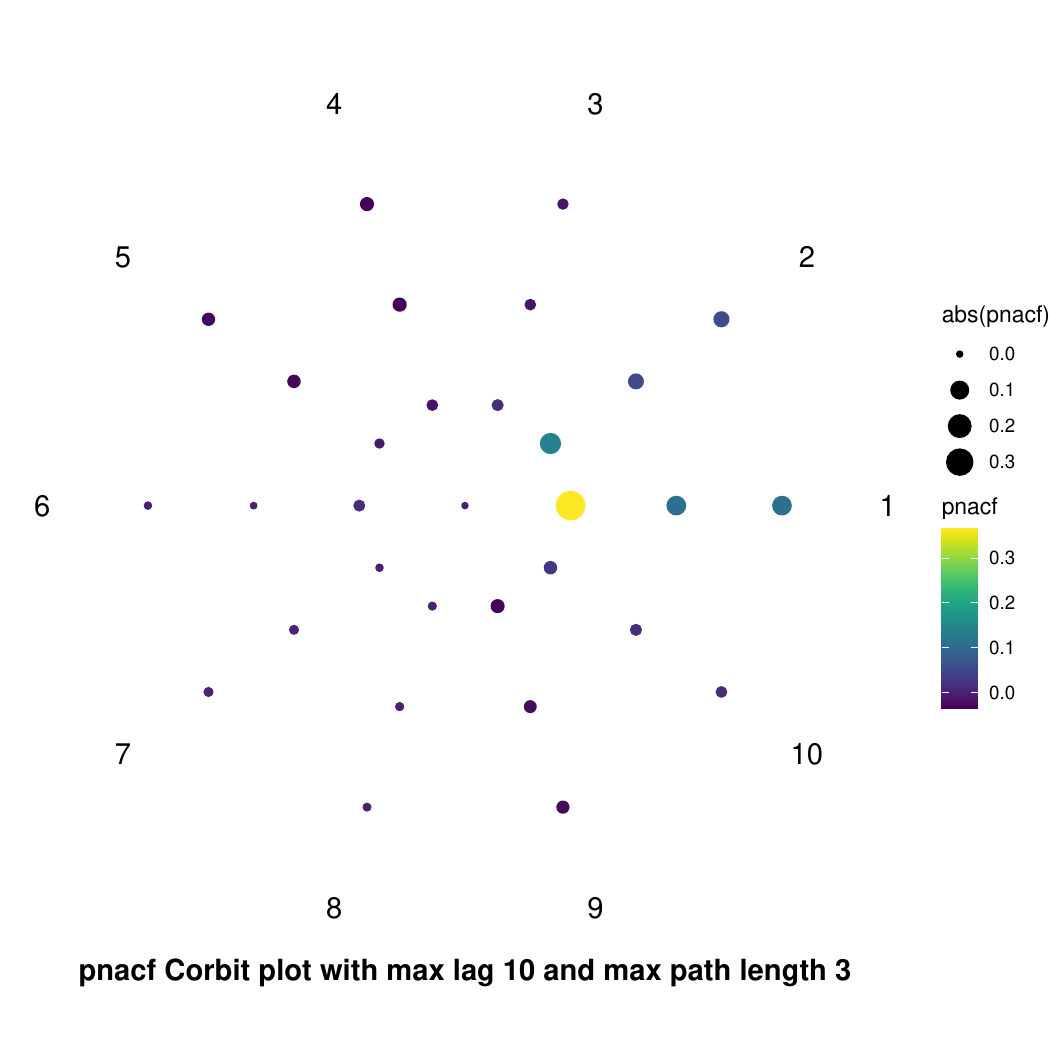}
            \caption{}
            \label{fig: corbit pnacf 2}
        \end{subfigure}
        \caption{Corbit plots: Figure \ref{fig: corbit nacf 1} shows the NACF for community one, Figure \ref{fig: corbit nacf 2} shows the NACF for community two, Figure \ref{fig: corbit pnacf 1} shows the PNACF for community one, and Figure \ref{fig: corbit pnacf 2} shows the PNACF for community two. The data are 100 realisations coming from a stationary $\comGNAR$ $\GNAR \left ( [1, 2], \{ [1], [1, 1] \}, 2 \right )$, where the underlying network is \textbf{fiveNet}, $K_1 = \{2, 3, 4\}$ and $K_2 = \{1, 5 \}$; see Figure \ref{fig: 2-communal fiveNet}.}
        \label{fig: corbits}
\end{figure}

\begin{figure}[H]
        \centering
        \begin{subfigure}{0.48\textwidth}
            \includegraphics[width=\textwidth]{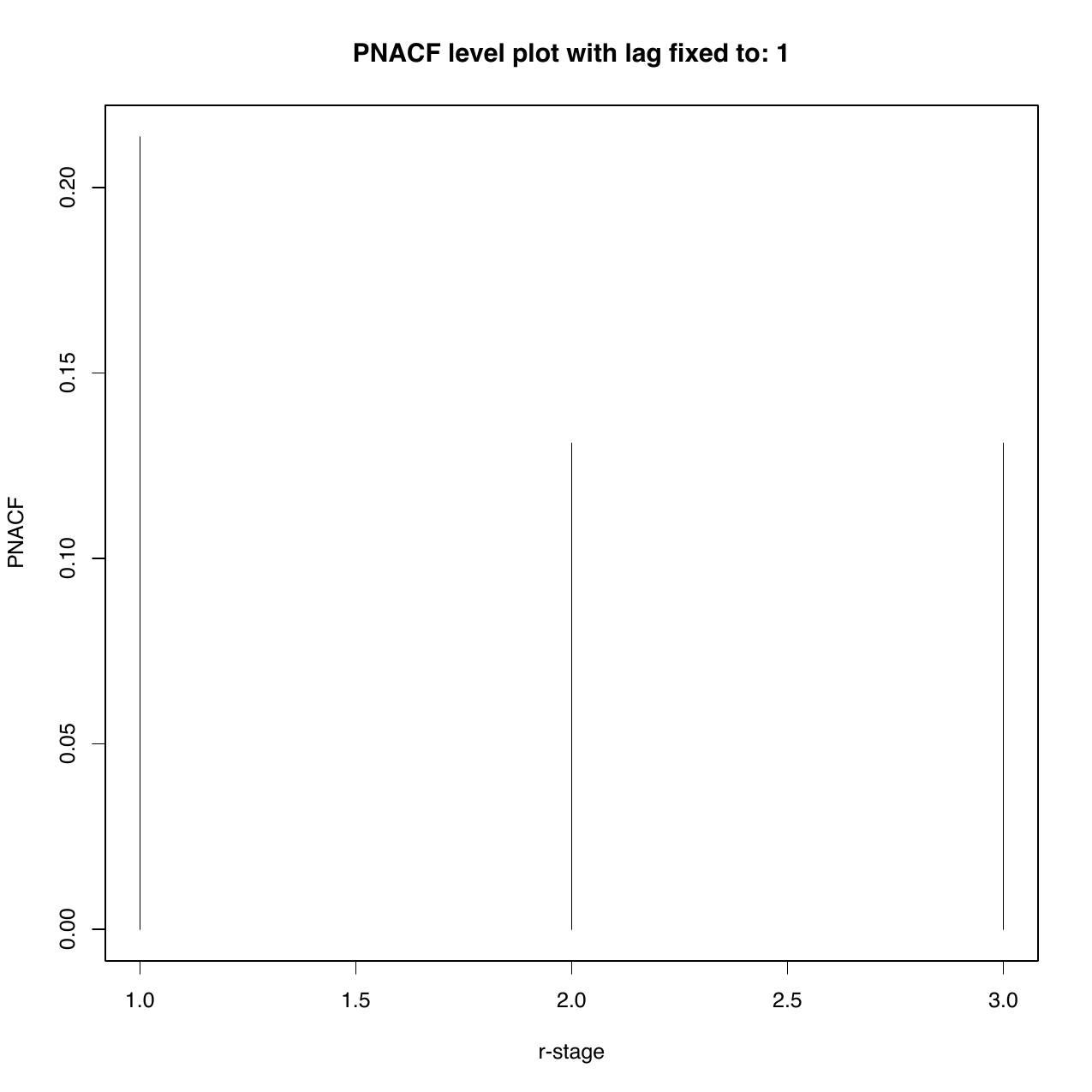}
            \caption{}
            \label{fig: pnacf level one lag 1}
        \end{subfigure}
        \hfill
        \begin{subfigure}{0.48\textwidth}
            \includegraphics[width=\textwidth]{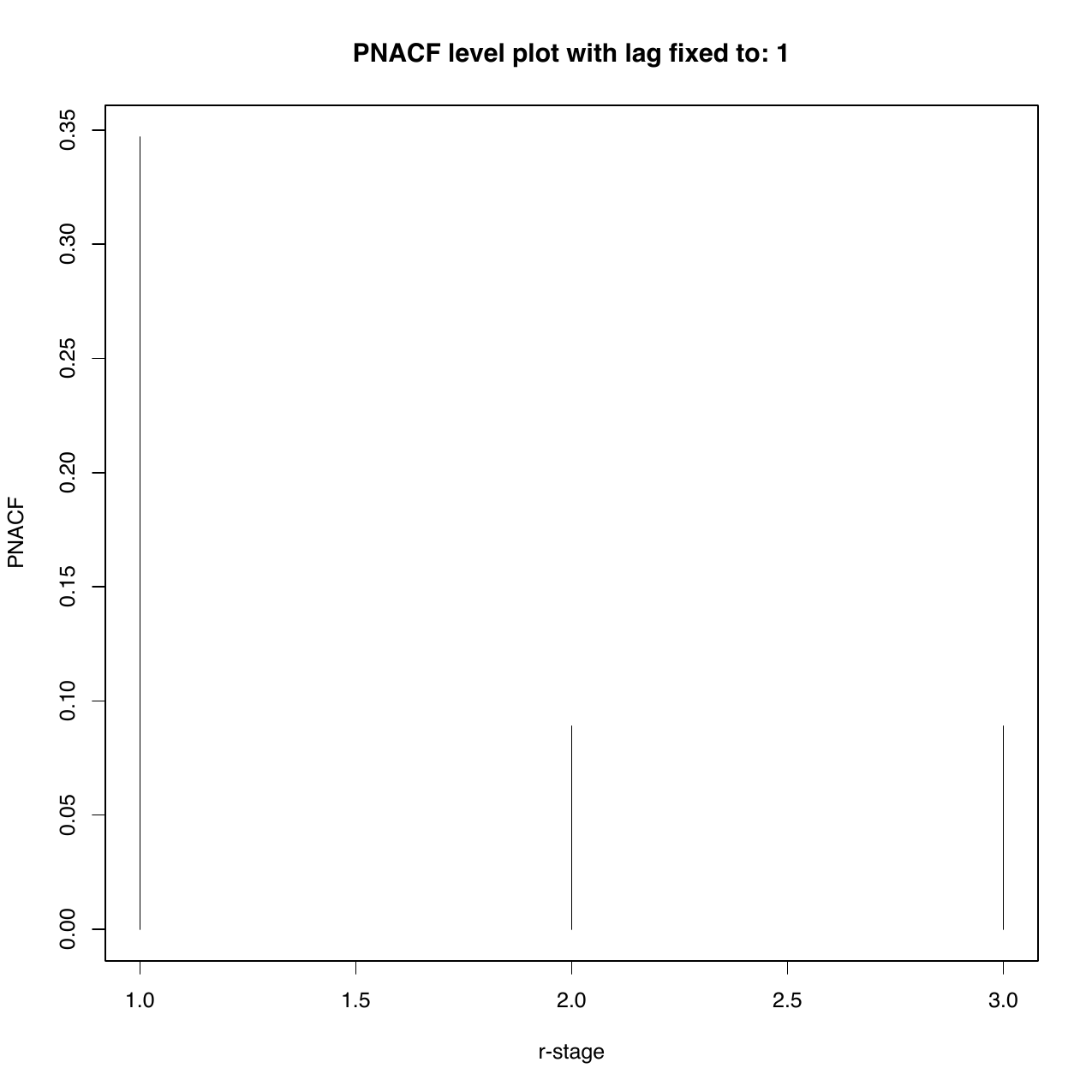}
            \caption{}
            \label{fig: pnacf level two lag 1}
        \end{subfigure}
        \hfill
        \begin{subfigure}{0.48\textwidth}
            \includegraphics[width=\textwidth]{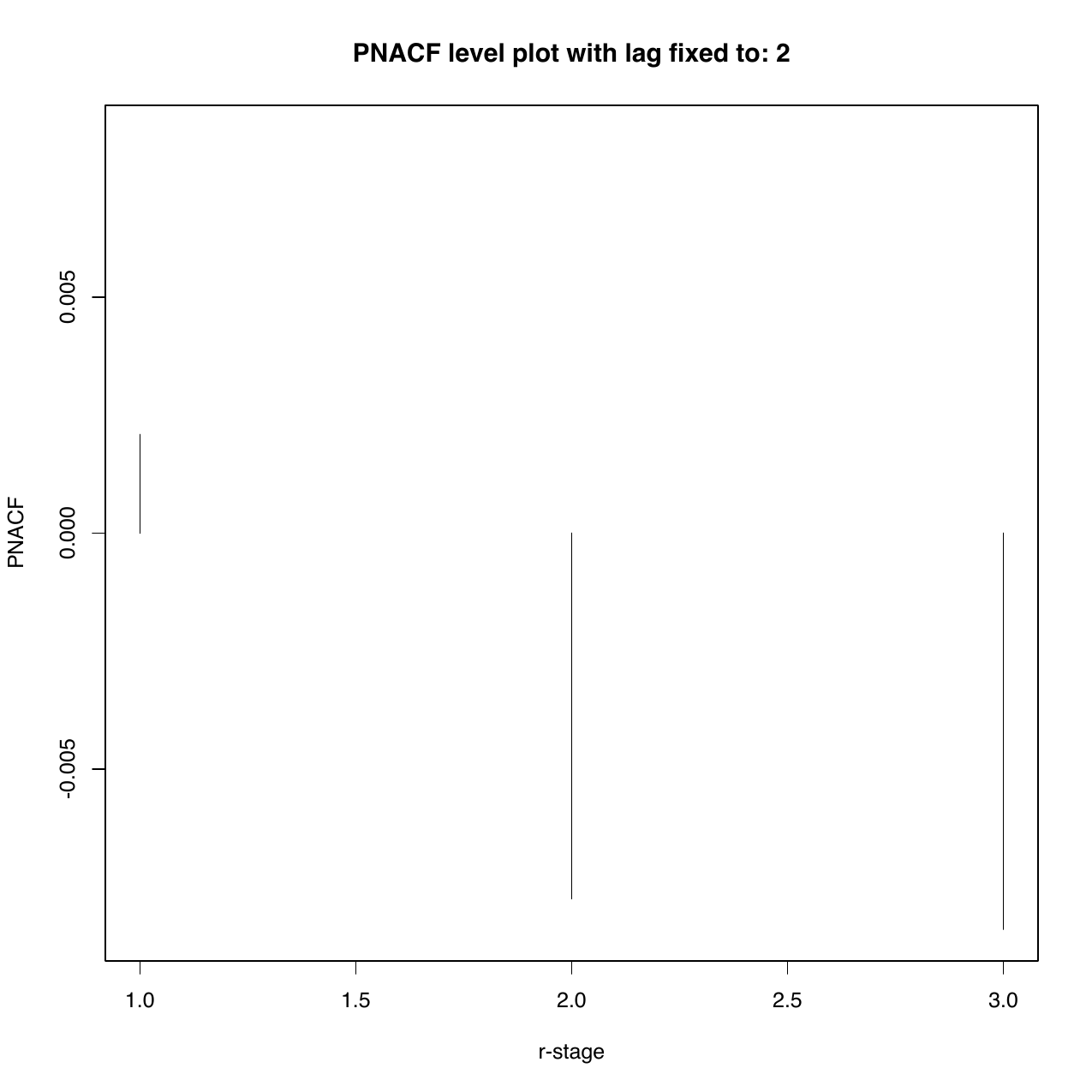}
            \caption{}
            \label{fig: pnacf level one lag 2}
        \end{subfigure}
        \hfill
        \begin{subfigure}{0.48\textwidth}
            \includegraphics[width=\textwidth]{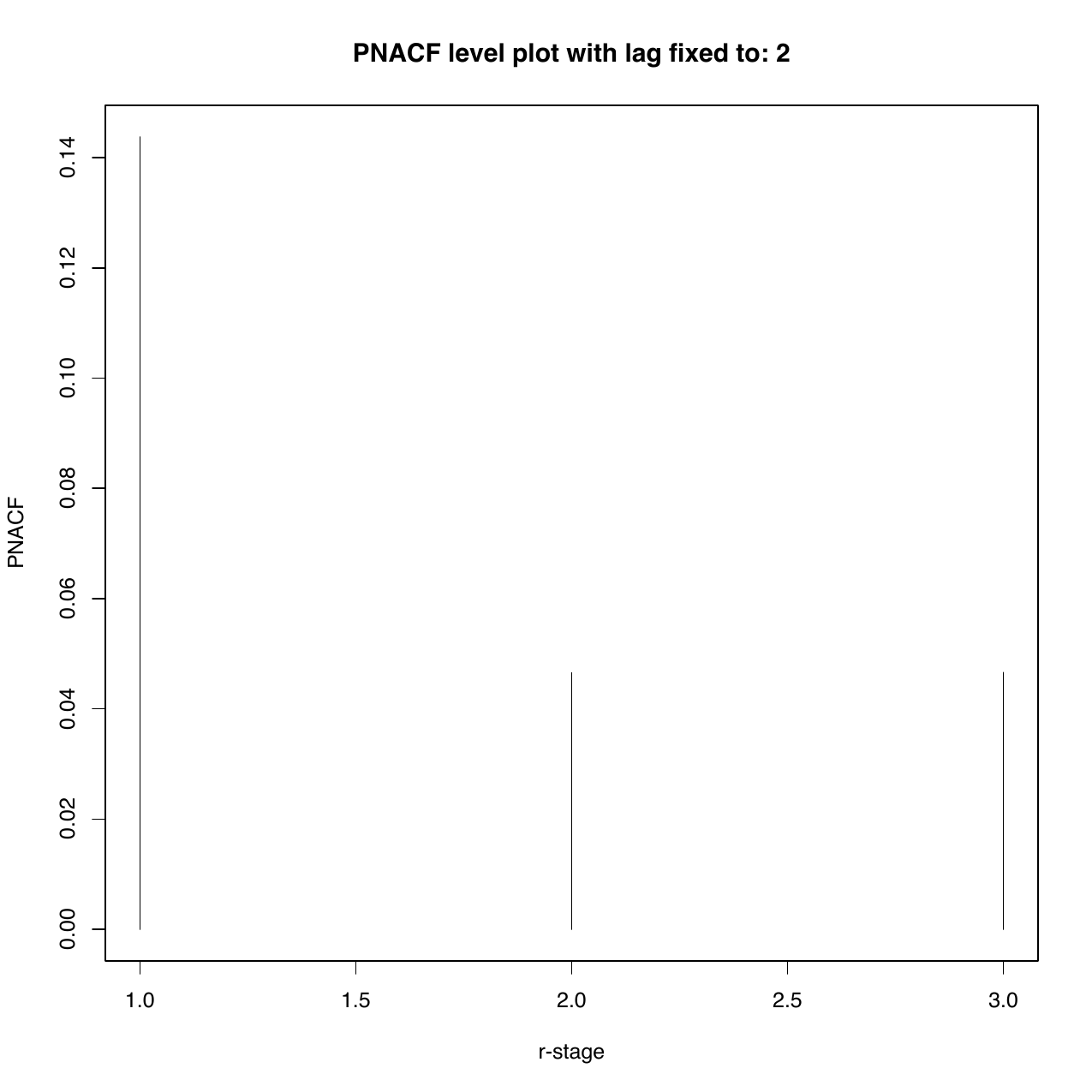}
            \caption{}
            \label{fig: pnacf level two lag 2}
        \end{subfigure}
        \caption{
            PNACF level plots with fixed lag, each plot shows the PNACF for $\rstage$s one thru three with lag fixed to one or two. Figure \ref{fig: pnacf level one lag 1} is for lag fixed to one and community one. Figure \ref{fig: pnacf level two lag 1} is for lag fixed to one and community two. Figure \ref{fig: pnacf level one lag 2} is for lag fixed to two and community one. And, Figure \ref{fig: pnacf level two lag 2} is for lag fixed to two and community two. The data are 100 realisations coming from a stationary $\comGNAR$ $\GNAR \left ( [1, 2], \{ [1], [1, 1] \}, 2 \right )$, where the underlying network is \textbf{fiveNet}, $K_1 = \{2, 3, 4\}$ and $K_2 = \{1, 5 \}$; see Figure \ref{fig: 2-communal fiveNet}.
        }
        \label{fig: pnacf level plots}
\end{figure}

\end{document}